\newtheorem{theorem}{Theorem}
\newcommand\bbR{\mathbb{R}}
\newcommand\bbN{\mathbb{N}}
\newcommand\bxi{\boldsymbol{\xi}}
\newcommand\bx{\boldsymbol{x}}
\newcommand\bv{\boldsymbol{v}}
\newcommand\bu{\boldsymbol{u}}
\newcommand\be{\boldsymbol{e}}
\newcommand\bC{\boldsymbol{C}}
\newcommand\bg{\boldsymbol{g}}
\newcommand\bh{\boldsymbol{h}}
\newcommand\bdeta{\boldsymbol{\eta}}
\newcommand\balpha{\boldsymbol{\alpha}}
\newcommand\dd{\,\mathrm{d}}
\newcommand\He{\mathit{He}}
\newcommand\bw{\boldsymbol{w}}
\numberwithin{equation}{section}
\newcommand\pd[2]{\dfrac{\partial {#1}}{\partial {#2}}}
\theoremstyle{remark} \newtheorem{remark}{Remark}}
\title{A Framework on Moment Model Reduction for Kinetic Equation}
\author{Zhenning Cai\thanks{Center for Computational Engineering
    Science, RWTH Aachen University, Aachen, Germany,
    email: {\tt cai@mathcces.rwth-aachen.de}.},~~
  Yuwei Fan\thanks{School of Mathematical Sciences, Peking University,
    Beijing, China, email: {\tt ywfan@pku.edu.cn}.},~~ Ruo
  Li\thanks{CAPT, LMAM \& School of Mathematical Sciences, Peking
    University, Beijing, China, email: {\tt rli@math.pku.edu.cn}.}}
\begin{document}
\maketitle
\begin{abstract}
  By a further investigation on the structure of the coefficient
  matrix of the globally hyperbolic regularized moment equations for
  Boltzmann equation in \cite{Fan}, we propose a uniform framework to
  carry out model reduction to general kinetic equations, to achieve
  certain moment system. With this framework, the underlying reason
  why the globally hyperbolic regularization in \cite{Fan} works is
  revealed. The even fascinating point is, with only routine
  calculation, existing models are represented and brand new models
  are discovered. Even if the study is restricted in the scope of the
  classical Grad's 13-moment system, new model with global
  hyperbolicity can be deduced.
\end{abstract}

\section{Introduction}
In 1949, the moment method was proposed by Grad \cite{Grad} for the
gas kinetic theory, and Grad's 13-moment theory is the most basic one
among the moment models beyond the Navier-Stokes theory. It has been
discovered in \cite{Muller} that the 1D reduction of Grad's 13-moment
equations is only hyperbolic around the equilibrium, and in
\cite{Grad13toR13}, it is found that in 3D case, the equilibrium is on
the boundary of the hyperbolicity region. Due to such a deficiency,
the application of the moment method is strongly limited. During a
long time, this remains a great obstacle to the improvement of the
moment method. However, through the effort of a large number of
researchers, the situation has become much more encouraging. Levermore
investigated the maximum entropy method and showed in \cite{Levermore}
that the moment system obtained by such a method owns global
hyperbolicity. Unfortunately, it is difficult to put it into
application owing to the lack of a simple analytical expression. Based
on Levermore's 14-moment closure, an affordable 14-moment closure is
proposed in \cite{McDonald} as an approximation, which also possesses
almost global hyperbolicity. Meanwhile, the discovery of globally
hyperbolic moment systems with large numbers of moments is also in
progress. In 1D case, two types of globally hyperbolic moment systems
are introduced in \cite{Fan} and \cite{Koellermeier} using different
strategies, and the approach in \cite{Koellermeier} successfully
brought us some insight on the globally hyperbolic regularization in
\cite{Fan}. The method in \cite{Fan} is extended to the
multidimensional case in two different versions \cite{Fan_new, ANRxx},
and some numerical results of the version in \cite{Fan_new} are
already carried out in \cite{Qiao}.

The research in this paper starts with a detailed study on the
globally hyperbolic regularization of Grad's moment method proposed in
\cite{Fan}. The regularization introduced in \cite{Fan} appears to be
the result of a direct demand of hyperbolicity with specified
characteristic speeds, and its underlying mechanism is not clarified,
which makes its extension to the multidimensional case mysterious. By
comparing the 1D regularized system with an adaptive discrete velocity
model, the coefficient matrix in the regularized system is properly
factorized and it is figured out how the system approximates the
Boltzmann equation. This new finding spontaneously leads to a new
deduction of the globally hyperbolic moment system, in which the time
derivative and the spatial derivative are discretized in the same way
so that one can achieve the global hyperbolicity by simply choosing a
real diagonalizable operator to discrete the multiplication of the
particle velocity. Due to this new deduction, the 1D globally
hyperbolic moment system is instantly obtained, and may no longer be
regarded as a regularization of the corresponding Grad's moment
system.

A significant advantage of this new deduction is its
extensibility. The whole procedure can be extended to the
multidimensional Boltzmann with more flexible ansatz on the
distribution function by only carrying out routine calculations. Based
on some simple assumptions, a framework for deriving ``moment
systems'' is established and we propose two conditions which ensure
the hyperbolicity. By further exploration, we point out that these two
conditions to hyperbolicity is hardly to be violated. This is on the
very contrary of the studies on the moment models in extensive
literatures, where the hyperbolicity is a subtle subject to be
achieved. Using this framework, both multidimensional models in
\cite{Fan_new, ANRxx} are reasonably deduced and a new globally
hyperbolic 13-moment system is discovered. Furthermore, this new
strategy can even be extended to a type of generic kinetic
equations. As an example, the radiative transfer equation is
investigated and it is found that the classic $M_N$ model is also
covered by the framework.

The rest of this paper is as follows: in Section \ref{sec:Boltzmann},
the Boltzmann equation and the moment method are briefly reviewed; in
Section \ref{sec:1D_struct}, the structure of the 1D globally
hyperbolic moment equations is studied and a new deduction is
proposed; in Section \ref{sec:uni_frmwk}, the deduction is generalized
to the multidimensional Boltzmann equation and later to the generic
kinetic equation; finally, some concluding remarks are made in Section
\ref{sec:conclusion}.


\section{Moment Method for Boltzmann Equation} \label{sec:Boltzmann}
In the gas kinetic theory, the movement of the gas molecules is
depicted from a statistical point of view. To be precise, it is
supposed that at any specific time $t$ and spatial point $\bx$, the
velocity of the gas molecule $\bxi$ is a random variable with
probability density function $p[t,\bx](\bxi)$, and the corresponding
distribution function is denoted as
\begin{equation}
f(t,\bx,\bxi) = n(t,\bx) \cdot p[t,\bx](\bxi),
  \qquad t \in \bbR^+, \quad \bx,\bxi \in \bbR^D,
\end{equation}
where $n(t,\bx)$ is the number density of the gas moleucles, and $D$
is the number of dimensions, which equals to $3$ in our physical
world. Denoting $\langle \phi \rangle = \displaystyle\int_{\bbR^D}
\phi \dd \bxi$ and assuming the gas to be an ideal gas, we have
\begin{equation} \label{eq:primitive}
\rho(t,\bx) = m_g \langle f \rangle, \quad
\rho \bu(t,\bx)  = m_g \langle \bxi f \rangle, \quad
\rho(t,\bx) R T(t,\bx)  = \frac{1}{D} m_g \langle |\bxi|^2 f \rangle,
\end{equation}
where the macroscopic functions $\rho$, $\bu$ and $T$ denotes the
density, velocity and temperature respectively, and the constant $R$
and $m_g$ denotes the gas constant and the mass of a single molecule.
By convention, we let $\theta(t,\bx) = R T(t,\bx)$.

The distribution function is governed by the Boltzmann equation, which
reads
\begin{equation} \label{eq:Boltzmann}
\frac{\partial f}{\partial t} + \bxi \cdot \nabla_{\bx} f =
  \mathcal{S}(f),
\end{equation}
where the right hand side $\mathcal{S}(f)$ is to model the effects of
the interaction among particles. In this paper, the concrete form of
$\mathcal{S}(f)$ is not concerned. We only assume here that if for some
$t_0$ and $\bx_0$,
\begin{equation} \label{eq:exp_quad}
f(t_0,\bx_0,\bxi) = \exp \left(
  A + \boldsymbol{B} \cdot \bxi + C |\bxi|^2
\right), \qquad
A \in \bbR, \quad \boldsymbol{B} \in \bbR^D, \quad C \in \bbR^-,
\end{equation}
then $\mathcal{S}(f)(t_0, \bx_0, \bxi) = 0$ (see e.g. \cite{Cowling}).
Using the relations in \eqref {eq:primitive}, we can write the
coefficients in \eqref{eq:exp_quad} as expressions of $\rho$, $\bu$
and $\theta$, and thus \eqref {eq:exp_quad} is reformulated by
\begin{equation}
f(t_0, \bx_0, \bxi) =
  \frac{\rho(t_0, \bx_0)}{m [2\pi \theta(t_0, \bx_0)]^{D/2}}
  \exp \left(
    -\frac{|\bxi - \bu(t_0, \bx_0)|^2}{2 \theta(t_0, \bx_0)}
  \right).
\end{equation}
This distribution function is actually the local equilibrium of the gas.

By the assumption on the operator $\mathcal{S}$, the unknown function
$f$ has to be close to the local equilibrium if $\mathcal{S}$ is
large. Based on this observation, Grad \cite{Grad} expanded the
distribution into the Hermite series. Here we follow \cite{NRxx} and
write the expansion as
\begin{equation} \label{eq:expansion}
f(t,\bx,\bxi) = \sum_{\balpha \in \bbN^D} 
  f_{\balpha}(t,\bx) \mathcal{H}_{\balpha}^{[\theta(t,\bx)]} \left(
    \frac{\bxi - \bu(t,\bx)}{\sqrt{\theta(t,\bx)}}
  \right).
\end{equation}
In the above expansion, $\balpha = (\alpha_1, \cdots, \alpha_D)$ is a
$D$-dimensional multi-index, and the basis functions
$\mathcal{H}_{\balpha}^{[\cdot]}(\cdot)$ is defined by
\begin{equation}
\mathcal{H}_{\balpha}^{[\theta]}(\bv) =
  m_g^{-1} \prod_{d=1}^D \frac{1}{\sqrt{2\pi}}
    \theta^{-\frac{\alpha_d+1}{2}}
    \He_{\alpha_d}(v_d) \exp \left(-\frac{v_d^2}{2} \right),
\quad \forall \bv = (v_1, \cdots, v_D) \in \bbR^D,
\end{equation}
where $\He_k(\cdot)$ is the Hermite polynomial of degree $k$, defined
by
\begin{equation}
\He_k(x) = (-1)^k \exp\left( \frac{x^2}{2} \right)
  \frac{\mathrm{d}^k}{\mathrm{d}x^k} \exp\left( -\frac{x^2}{2} \right).
\end{equation}
It is obvious that if $f_{\balpha} = 0$ for all $|\balpha| \geqslant
1$, then $f$ degenerates to the local equilibrium. Additionally, by
substituting \eqref {eq:expansion} into \eqref{eq:primitive}, one has
\begin{equation} \label{eq:restriction}
f_0 = \rho, \qquad f_{\balpha} \equiv 0 \text{~~if~} |\balpha| = 1,
  \qquad \sum_{|\balpha| = 1} f_{2\balpha} \equiv 0.
\end{equation}

Subsequently, the expansion \eqref{eq:expansion} is substituted into
the Boltzmann equation and the equations of $f_{\balpha}$ are deduced
by letting the coefficients of the expansion vanished. For simplicity,
here we only consider a model case $D = 1$, where $\balpha$ and $\bu$
are turned into scalars and we note them insteadly by $\alpha$ and
$u$. The resulted equations are
\begin{equation} \label{eq:full_mnt_eqs}
\begin{split}
& \frac{\partial \rho}{\partial t} +
  u \frac{\partial \rho}{\partial x} +
  \rho \frac{\partial u}{\partial x} = 0, \\
& \rho \frac{\partial u}{\partial t} +
  \theta \frac{\partial \rho}{\partial x} +
  \rho u \frac{\partial u}{\partial x} +
  \rho \frac{\partial \theta}{\partial x} = 0, \\
& \frac{1}{2} \rho \frac{\partial \theta}{\partial t} +
  \frac{1}{2} \rho u \frac{\partial \theta}{\partial x} +
  \rho u \frac{\partial u}{\partial x} +
  3 \frac{\partial f_3}{\partial x} = 0, \\
& \frac{\partial f_{\alpha}}{\partial t} -
  f_{\alpha-1} \frac{\theta}{\rho} \frac{\partial \rho}{\partial x} +
  (\alpha+1) f_{\alpha} \frac{\partial u}{\partial x} + \left(
    \frac{1}{2} \theta f_{\alpha-3} + \frac{\alpha-1}{2} f_{\alpha-1}
  \right) \frac{\partial \theta}{\partial x} \\
& \qquad -\frac{3}{\rho} f_{\alpha-2} \frac{\partial f_3}{\partial x} +
  \theta \frac{\partial f_{\alpha-1}}{\partial x} +
  u \frac{\partial f_\alpha}{\partial x} +
  (\alpha+1) \frac{\partial f_{\alpha+1}}{\partial x} =
  \mathcal{S}_{\alpha}, \quad \alpha \geqslant 3,
\end{split}
\end{equation}
where $\mathcal{S}_{\alpha}$ is obtained by expansiong of the
collision term which is out of our concern in this paper, and we
emphasize that $f_1 = f_2 = 0$ according to \eqref{eq:restriction}.

Obviously the equations in \eqref{eq:full_mnt_eqs} form a system with
infinite number of unknowns. In order to obtain a system with finite
number of equations, Grad suggested to choose a positive integer $M
\geqslant 2$, and then simply set $f_{\alpha}$ to be zero for all
$\alpha > M$, and discard all the equations containing $\partial_t
f_{\alpha}$ with $\alpha > M$. The resulting system is called Grad's
$(M+1)$-moment system.


\section{Globally Hyperbolic Regularized Moment System}
\label{sec:1D_struct}
In \cite{Fan}, the authors investigated the hyperbolicity\footnote{A
  first-order quasi-linear partial differential system
\begin{equation}
\frac{\partial \bw}{\partial t} +
  \sum_{k=1}^D {\bf A}_k(\bw) \frac{\partial \bw}{\partial x_k} =
  \boldsymbol{S}(\bw)
\end{equation}
is hyperbolic in some region $\Omega$ if and only if any linear
combinations of ${\bf A}_k(\bw)$, $k = 1,\cdots,D$ is real
diagonalizable for all $\bw \in \Omega$. } of 1D Grad's moment systems
for any $M$, and concluded that for $M \geqslant 3$, Grad's moment
system is only hyperbolic around the equilibrium. Moreover, a globally
hyperbolic moment regularization to Grad's system was then
proposed. However, it was not revealed in \cite{Fan} the intrinsic
structure of the regularized system, which makes the hyperbolicity
achieved therein a mystery. In this section, the underlying structure
of the globally hyperbolic regularization is discovered, and the
hyperbolicity is clarified as a natural deduction.

\subsection{Review of system in 1D case}
We first review the globally hyperbolic moment system in \cite{Fan}.
We write Grad's $(M+1)$-moment system as
\begin{equation}
\frac{\partial \bw}{\partial t} +
  {\bf A}(\bw) \frac{\partial \bw}{\partial x} =
  \boldsymbol{S}(\bw),
\end{equation}
where $\bw = (\rho, u, \theta, f_3, \cdots, f_M)^T$, and the matrix
${\bf A}(\bw)$ is given by
\begin{equation} \label{eq:A_M}
\scalebox{1.0}{\footnotesize %
\renewcommand{\arraystretch}{1.2}
$\begin{pmatrix}
  u & \rho & 0 & \hdotsfor{6} & 0\\
  \theta / \rho & u & 1 & 0 & \hdotsfor{5} & 0\\
  0 & 2 \theta & u & 6 / \rho & 0 & \hdotsfor{4} & 0\\
  0 & 4 f_3 & \rho \theta / 2 & u & 4 & 0 & \hdotsfor{3} & 0\\
  - \theta f_3 / \rho & 5 f_4 & 3 f_3 / 2 & \theta & u & 5 & 0 & \hdotsfor{2} & 0\\
  \hdotsfor{10} \\
  -\theta f_{M - 2} / \rho & M f_{M - 1} &
    \frac{1}{2}[(M - 2) f_{M - 2} + \theta f_{M - 4}] &
    -3 f_{M - 3} / \rho & 0 & \cdots & 0 &\theta & u & M\\
  -\theta f_{M-1} / \rho & (M+1) f_M &
    \frac{1}{2}[(M-1) f_{M - 1} + \theta f_{M - 3}] &
    -3 f_{M-2} / \rho & 0 & \hdotsfor{2} & 0 & \theta & u
\end{pmatrix}$.}
\end{equation}
In \cite{Fan}, it is shown that this matrix is real diagonalizable
only when $|f_M / (\rho\theta^{M/2})|$ and $|f_{M-1} /
(\rho\theta^{(M-1)/2})|$ are small enough. If we modify the matrix
${\bf A}(\bw)$ to a ``regularized coefficient matrix'' $\hat{\bf
A}(\bw)$ defined by
\begin{equation} \label{eq:hat_A_w}
\hat{\bf A}(\bw) := {\bf A}(\bw)
  -(M+1) f_M {\bf E}_{M+1,2} - \frac{M+1}{2} f_{M-1} {\bf E}_{M+1,3},
\end{equation}
then the ``regularized system''
\begin{equation} \label{eq:regularized}
\frac{\partial \bw}{\partial t} +
  \hat{\bf A}(\bw) \frac{\partial \bw}{\partial x} =
  \boldsymbol{S}(\bw)
\end{equation}
is globally hyperbolic. In equation \eqref{eq:hat_A_w}, ${\bf
E}_{p,q}$ stands for the $(M+1) \times (M+1)$ matrix $(E_{ij})$
with $E_{ij} = \delta_{ip} \delta_{jq}$. The characteristic polynomial
of $\hat{\bf A}(\bw)$ is
\begin{equation}
\det[\lambda {\bf I} - \hat{\bf A}(\bw)] =
  \theta^{\frac{M+1}{2}} \He_{M+1} \left(
    \frac{\lambda - u}{\sqrt{\theta}}
  \right).
\end{equation}
Therefore the characteristic speeds of \eqref{eq:regularized} are
\begin{equation}
u + c_j \sqrt{\theta}, \qquad j = 0, \cdots, M,
\end{equation}
where $c_0, \cdots, c_M$ are the distinct roots of the polynomial
$\He_{M+1}(x)$.

\subsection{Structure of the coefficient matrix} \label{sec:struct}
It has been discussed in \cite{Fan} that the system \eqref
{eq:regularized} is similar as a ``shifted and scaled discrete
velocity model'' whose discrete velocities are the characteristic
speeds of the system. Our investigation starts from this point. A
canonical discrete velocity model with characteristic speeds $\xi_0,
\cdots, \xi_M$ has the following form:
\begin{equation} \label{eq:dvm}
\frac{\partial f^{\mathrm{dvm}}_k}{\partial t} +
  \xi_k \frac{\partial f^{\mathrm{dvm}}_k}{\partial x} =
  S^{\mathrm{dvm}}_k, \qquad k = 0, \cdots, M.
\end{equation}
Here $f^{\mathrm{dvm}}_k(t,x)$ approximates $f(t,x,\xi_k)$, and
$S^{\mathrm{dvm}}_k(t,x)$ approximates $S(f)(t,x,\xi_k)$. For
convenience, we write \eqref{eq:dvm} as
\begin{equation} \label{eq:dvm_matrix}
\frac{\partial \boldsymbol{f}^{\mathrm{dvm}}}{\partial t} +
  {\bf\Xi} \frac{\partial \boldsymbol{f}^{\mathrm{dvm}}}{\partial x} =
  \boldsymbol{S}^{\mathrm{dvm}},
\end{equation}
where
\begin{equation}
\boldsymbol{f}^{\mathrm{dvm}} =
  (f^{\mathrm{dvm}}_0, \cdots, f^{\mathrm{dvm}}_M)^T, \quad
\boldsymbol{S}^{\mathrm{dvm}} = 
  (S^{\mathrm{dvm}}_0, \cdots, S^{\mathrm{dvm}}_M)^T, \quad
{\bf \Xi} = \mathrm{diag} \{ \xi_0, \cdots, \xi_M \}.
\end{equation}
In order to compare the regularized moment system
\eqref{eq:regularized} with \eqref{eq:dvm_matrix}, we first
factorize the matrix $\hat{\bf A}(\bw)$ by
\begin{equation}
\hat{\bf A}(\bw) =
  [{\bf L}(\bw)]^{-1} {\bf \Lambda}(\bw) {\bf L}(\bw), \quad
{\bf \Lambda}(\bw) = \mathrm{diag} \left\{
  u + c_0 \sqrt{\theta}, \cdots, u + c_M \sqrt{\theta}
\right\},
\end{equation}
and then substitute it into the system \eqref{eq:regularized} and
multiply both sides by ${\bf L}(\bw)$. This results in
\begin{equation} \label{eq:factorized}
{\bf L}(\bw) \frac{\partial \bw}{\partial t} +
  {\bf \Lambda}(\bw) {\bf L}(\bw) \frac{\partial \bw}{\partial x} =
  {\bf L}(\bw) {\boldsymbol{S}}(\bw).
\end{equation}
Thus it is clear that ${\bf L}(\bw) \partial_t \bw$ corresponds to
$\partial_t \boldsymbol{f}^{\mathrm{dvm}}$, and ${\bf L}(\bw)
\partial_x \bw$ corresponds to $\partial_x
\boldsymbol{f}^{\mathrm{dvm}}$. It can be verified for $M = 2$ that it
is impossible to find a vector function $\bg(\bw)$ such that
$\partial_{t,x} \bg(\bw) = {\bf L}(\bw) \partial_{t,x} \bw$.
Therefore, there does not exist a perfect correspondence between the
regularized moment equations and the discrete velocity model. However,
in the discrete velocity model, we may regard $\partial_{t,x}
\boldsymbol{f}^{\mathrm{dvm}}$ as the approximations of
\begin{displaymath}
\left( \frac{\partial f}{\partial (t,x)} (t,x,\xi_0), \cdots,
  \frac{\partial f}{\partial (t,x)} (t,x,\xi_M) \right)^T.
\end{displaymath}
Accordingly, ${\bf L}(\bw) \partial_{t,x} \bw$ is regarded as the
approximations of
\begin{displaymath}
\left(
  \frac{\partial f}{\partial (t,x)} (t,x,u + c_0 \sqrt{\theta}),
  \cdots, \frac{\partial f}{\partial (t,x)} (t,x,u + c_M \sqrt{\theta})
\right)^T.
\end{displaymath}
Since $u + c_j \sqrt{\theta}$, $j = 0, \cdots, M$, are Hermite-Gauss
quadrature points, such approximation is equivalent to approximating
the functions $\partial_{t,x} f$ by a finite Hermite expansion
\begin{equation} \label{eq:approx}
\frac{\partial f}{\partial (t,x)} \approx \sum_{\alpha = 0}^M
  D^{t,x}_{\alpha}(\bw) \mathcal{H}^{[\theta]}_{\alpha}
    \left( \frac{\xi - u}{\sqrt{\theta}} \right),
\end{equation}
where
\begin{equation} \label{eq:coef}
\left(
  D^{t,x}_0(\bw), \cdots, D^{t,x}_M(\bw)
\right)^T = {\bf T}^{[\theta]} {\bf L}(\bw) \partial_{t,x} \bw,
\end{equation}
and ${\bf T}^{[\theta]}$ is the transformation matrix from the values
on the quadrature points to the coefficients. Now the system \eqref
{eq:factorized} can be rewritten as
\begin{equation}
{\bf T}^{[\theta]} {\bf L}(\bw) \frac{\partial \bw}{\partial t} +
  \left[
    {\bf T}^{[\theta]} {\bf \Lambda}(\bw)
    \left( {\bf T}^{[\theta]} \right)^{-1}
  \right] {\bf T}^{[\theta]} {\bf L}(\bw)
    \frac{\partial \bw}{\partial x} =
{\bf T}^{[\theta]} {\bf L}(\bw) \boldsymbol{S}(\bw).
\end{equation}
Below we are going to calculate the matrices ${\bf T}^{[\theta]} {\bf
\Lambda}(\bw) \left( {\bf T}^{[\theta]} \right)^{-1}$ and ${\bf
T}^{[\theta]} {\bf L}(\bw)$.

Let $\bh = (h_0, \cdots, h_M)^T$ be a $(M+1)$-dimensional vector and
define
\begin{equation} \label{eq:H}
H(v) = \sum_{\alpha=0}^M h_{\alpha}\mathcal{H}^{[\theta]}_{\alpha}(v).
\end{equation}
According to the definition of ${\bf T}^{[\theta]}$, we have
\begin{equation}
\big( {\bf T}^{[\theta]} \big)^{-1} \bh =
  \left( H(c_0), \cdots, H(c_M) \right)^T,
\end{equation}
and then
\begin{equation}
{\bf \Lambda}(\bw) \big( {\bf T}^{[\theta]} \big)^{-1} \bh =
  \left( (u + c_0 \sqrt{\theta}) H(c_0), \cdots,
    (u + c_M \sqrt{\theta}) H(c_M) \right)^T.
\end{equation}
If we define $\tilde{\bh} = (\tilde{h}_0, \cdots, \tilde{h}_M)^T =
{\bf T}^{[\theta]} {\bf \Lambda}(\bw) \left( {\bf T}^{[\theta]}
\right)^{-1} \bh$, and
\begin{equation} \label{eq:tilde_H}
\tilde{H}(v) = \sum_{\alpha=0}^M
  \tilde{h}_{\alpha}\mathcal{H}^{[\theta]}_{\alpha}(v),
\end{equation}
then $\tilde{\bh}$ is the unique vector such that
\begin{equation} \label{eq:tilde_H_eq_xi_H}
\left( \tilde{H}(c_0), \cdots, \tilde{H}(c_M) \right)^T =
  \left( (u + c_0 \sqrt{\theta}) H(c_0), \cdots,
    (u + c_M \sqrt{\theta}) H(c_M) \right)^T.
\end{equation}
By the recursion relation of the Hermite polynomials, it is obtained
that
\begin{equation}
\begin{split}
(u + v \sqrt{\theta}) H(v) &= \sum_{\alpha=0}^M h_{\alpha}
\left[
  \theta \mathcal{H}^{[\theta]}_{\alpha+1}(v) +
  u \mathcal{H}^{[\theta]}_{\alpha}(v) +
  \alpha \mathcal{H}^{[\theta]}_{\alpha-1}(v)
\right] \\
& = \sum_{\alpha=0}^M \left[
  \theta h_{\alpha-1} + u h_{\alpha} + (\alpha + 1) h_{\alpha + 1}
\right] \mathcal{H}^{[\theta]}_{\alpha} (v) +
\theta h_M \mathcal{H}^{[\theta]}_{M+1}(v),
\end{split}
\end{equation}
where $\mathcal{H}_{-1}^{[\theta]}(v)$, $h_{-1}$ and $h_{M+1}$ are
taken as zeros. Since $\mathcal{H}_{M+1}^{[\theta]}(c_j) = 0$, $j = 0,
\cdots, M$, we have
\begin{equation} \label{eq:xi_H}
(u + c_j \sqrt{\theta}) H(c_j) = \sum_{\alpha=0}^M \left[
  \theta h_{\alpha-1} + u h_{\alpha} + (\alpha + 1) h_{\alpha + 1}
\right] \mathcal{H}^{[\theta]}_{\alpha} (c_j), \qquad
  j = 0, \cdots, M.
\end{equation}
Collecting \eqref{eq:tilde_H}, \eqref{eq:tilde_H_eq_xi_H} and
\eqref{eq:xi_H} together, one instantly figures out that
\begin{equation}
\tilde{h}_{\alpha} =
  \theta h_{\alpha-1} + u h_{\alpha} + (\alpha + 1) h_{\alpha+1},
\qquad \alpha = 0, \cdots, M.
\end{equation}
Thus the matrix ${\bf T}^{[\theta]} {\bf \Lambda}(\bw) \left( {\bf
    T}^{[\theta]} \right)^{-1}$ is obviously a tridiagonal matrix. The
diagonal entries are all $u$, the subdiagonal entries are all
$\theta$, and the superdiagonal entries are $1,2,\cdots,M$. This
matrix is denoted as ${\bf M}(u,\theta)$ below. The analysis reveals
that ${\bf M}(u,\theta)$ actually acts as an operator which multiply a
function like \eqref{eq:H} by $\xi$, and then drop out the term with
the highest degree.

In order to caluclate the matrix ${\bf T}^{[\theta]} {\bf L}(\bw)$, we
first consider small values of $M$, and calculate it directly using
some computer algebra system. Then, by simple induction, we find that
the matrix ${\bf T}^{[\theta]} {\bf L}(\bw)$ has the following general
form:
\begin{equation} \label{eq:block_D}
{\bf D}(\bw) = \begin{pmatrix}
  {\bf D}_{11}(\rho) & {\bf 0} \\
  {\bf D}_{21}(\bw) & {\bf I}_{M-3}
\end{pmatrix},
\end{equation}
where ${\bf D}_{11} = \mathrm{diag} \{1, \rho, \rho/2, 1\}$, and
\begin{equation}
{\bf D}_{21} = \begin{pmatrix}
0 & f_3 & 0 & 0 \\
0 & f_4 & \frac{1}{2} f_3 & 0 \\
0 & f_5 & \frac{1}{2} f_4 & 0 \\
\vdots & \vdots & \vdots & \vdots \\
0 & f_{M-1} & \frac{1}{2} f_{M-2} & 0
\end{pmatrix}.
\end{equation}
One can validate ${\bf D}(\bw) = {\bf T}^{[\theta]} {\bf L}(\bw)$ by
verifying ${\bf D}(\bw) {\bf A}(\bw) = {\bf M}(u,\theta) {\bf D}(\bw)$
with direct calculation. And finally the regularized moment system
\eqref{eq:regularized} is rewritten as
\begin{equation}
{\bf D}(\bw) \frac{\partial \bw}{\partial t} +
  {\bf M}(\bw) {\bf D}(\bw) \frac{\partial \bw}{\partial x} = 
  {\bf D}(\bw) \boldsymbol{S}(\bw).
\end{equation}

In order to finish the comparison between the regularized moment
system and the discrete velocity model, we turn back to the
approximation \eqref{eq:approx} and try to find out how the right hand
side of \eqref{eq:approx} approximates the derivatives on the left
hand side. For convenience, we consider only the spatial derivative.
Taking spatial derivatives on both sides of \eqref{eq:expansion} with
$D = 1$, we have the following expansion of $\partial_x f$:
\begin{equation}
\frac{\partial f}{\partial x} = \sum_{\alpha = 0}^{+\infty} \left(
  \frac{\partial f_{\alpha}}{\partial x} +
  \frac{\partial u}{\partial x} f_{\alpha - 1} +
  \frac{1}{2} \frac{\partial \theta}{\partial x} f_{\alpha - 2}
\right) \mathcal{H}^{[\theta]}_{\alpha} \left(
  \frac{\xi - u}{\sqrt{\theta}}
\right),
\end{equation}
where $f_{-1} = f_{-2} = 0$ (See \cite{NRxx_new} for the detailed
procedure). We calculate the coefficients $D_{\alpha}^x(\bw)$ using
\eqref{eq:block_D} and \eqref{eq:coef}, and find out that the right
hand side of $\eqref{eq:approx}$ is
\begin{equation}
\sum_{\alpha = 0}^M
  D^x_{\alpha}(\bw) \mathcal{H}^{[\theta]}_{\alpha}
    \left( \frac{\xi - u}{\sqrt{\theta}} \right) =
\sum_{\alpha = 0}^M \left(
  \frac{\partial f_{\alpha}}{\partial x} +
  \frac{\partial u}{\partial x} f_{\alpha - 1} +
  \frac{1}{2} \frac{\partial \theta}{\partial x} f_{\alpha - 2}
\right) \mathcal{H}^{[\theta]}_{\alpha} \left(
  \frac{\xi - u}{\sqrt{\theta}}
\right).
\end{equation}
Now it is clear that the approximation \eqref{eq:approx} is actually a
direct truncation in the Hermite expansion.

As a summary, through a comparison with the discrete velocity model,
we figure out the underlying mechanism of the regularized moment
system, which evidently makes the reasonability and credibility of
this new model more pronounced. And consequentially, this motivates us
to propose the approach below to deduce the globally hyperbolic moment
system directly.

\subsection{New deduction procedure}
\label{sec:new_deduction}
In \cite{Fan}, the system is obtained by first deriving Grad's moment
system and then applying the regularization. Based on the discussion
above, we now may deduce the regularized moment system directly from
the Boltzmann equation. The procedure is as follows:
\begin{enumerate}
\item Expand the distribution function into the Hermite series:
\begin{equation} \label{eq:1D_expansion}
f(t,x,\xi) = \sum_{\alpha=0}^{+\infty} f_{\alpha}(t,x)
  \mathcal{H}^{[\theta]}_{\alpha} \left(
    \frac{\xi - u}{\sqrt{\theta}}
  \right).
\end{equation}
\item Calculate the time and spatial derivatives and the collision
term:
\begin{subequations} \label{eq:full_expansion}
\begin{align}
\frac{\partial f}{\partial t} &= \sum_{\alpha=0}^{+\infty}
  G^t_{\alpha}(t,x) \mathcal{H}^{[\theta]}_{\alpha} \left(
    \frac{\xi - u(t,x)}{\sqrt{\theta(t,x)}}
  \right), \\
\frac{\partial f}{\partial x} &= \sum_{\alpha=0}^{+\infty}
  G^x_{\alpha}(t,x) \mathcal{H}^{[\theta]}_{\alpha} \left(
    \frac{\xi - u(t,x)}{\sqrt{\theta(t,x)}}
  \right), \\
S(f) &= \sum_{\alpha=0}^{+\infty}
  Q_{\alpha}(t,x) \mathcal{H}^{[\theta]}_{\alpha} \left(
    \frac{\xi - u(t,x)}{\sqrt{\theta(t,x)}}
  \right),
\end{align}
\end{subequations}
where
\begin{equation}
G^t_{\alpha} = \frac{\partial f_{\alpha}}{\partial t} +
  \frac{\partial u}{\partial t} f_{\alpha - 1} +
  \frac{1}{2} \frac{\partial \theta}{\partial t} f_{\alpha - 2}, \quad
G^x_{\alpha} = \frac{\partial f_{\alpha}}{\partial x} +
  \frac{\partial u}{\partial x} f_{\alpha - 1} +
  \frac{1}{2} \frac{\partial \theta}{\partial x} f_{\alpha - 2},
\end{equation}
and $Q_{\alpha}(t,x)$ depends on the collision model.
\item \label{step:first_cut_off} For the positive integer $M \geqslant
  2$, apply a truncation on \eqref{eq:full_expansion}. The truncation
  drops off all $f_{\alpha}$ with $\alpha > M$, and discards all terms
  with $\alpha > M$ in the summations in \eqref{eq:full_expansion}. We
  write the results as
\begin{subequations}
\begin{align}
\label{eq:time_derivative}
\frac{\partial f}{\partial t} &\approx \sum_{\alpha=0}^M
  G^t_{\alpha}(t,x) \mathcal{H}^{[\theta]}_{\alpha} \left(
    \frac{\xi - u(t,x)}{\sqrt{\theta(t,x)}}
  \right), \\
\label{eq:spatial_derivative}
\frac{\partial f}{\partial x} &\approx \sum_{\alpha=0}^M
  G^x_{\alpha}(t,x) \mathcal{H}^{[\theta]}_{\alpha} \left(
    \frac{\xi - u(t,x)}{\sqrt{\theta(t,x)}}
  \right), \\
\label{eq:collision}
S(f) & \approx \sum_{\alpha=0}^M
  \tilde{Q}_{\alpha}(t,x) \mathcal{H}^{[\theta]}_{\alpha} \left(
    \frac{\xi - u(t,x)}{\sqrt{\theta(t,x)}}
  \right).
\end{align}
\end{subequations}
\item Calculate the convection term $\xi \partial_x f$ from \eqref
{eq:spatial_derivative}:
\begin{equation} \label{eq:convection}
\xi \frac{\partial f}{\partial x} \approx
  \xi \sum_{\alpha=0}^M
  G^x_{\alpha}(t,x) \mathcal{H}^{[\theta]}_{\alpha} \left(
    \frac{\xi - u(t,x)}{\sqrt{\theta(t,x)}}
  \right) =
  \sum_{\alpha=0}^{M+1}
    J_{\alpha}(t,x) \mathcal{H}^{[\theta]}_{\alpha} \left(
      \frac{\xi - u(t,x)}{\sqrt{\theta(t,x)}}
    \right),
\end{equation}
where $J_{\alpha} = \theta G^x_{\alpha - 1} + u G^x_{\alpha} + (\alpha
+ 1) G^x_{\alpha + 1}$.
\item \label{step:second_cut_off} Truncate \eqref{eq:convection} again:
\begin{equation}
\xi \frac{\partial f}{\partial x} \approx \sum_{\alpha=0}^{M}
  J_{\alpha}(t,x) \mathcal{H}^{[\theta]}_{\alpha} \left(
    \frac{\xi - u(t,x)}{\sqrt{\theta(t,x)}}
  \right),
\end{equation}
\item Substitute \eqref{eq:time_derivative}, \eqref{eq:collision} and
  \eqref{eq:convection} into the Boltzmann equation and extract the
  coefficients for the same basis functions. The result obtained
\begin{equation}
G^t_{\alpha} + J_{\alpha} = \tilde{Q}_{\alpha},
  \quad 0 \leqslant \alpha \leqslant M
\end{equation}
is the hyperbolic $(M+1)$-moment system.
\end{enumerate}
Comparing with this procedure with the analysis in section
\ref{sec:struct}, it is clear that the vectors ${\bf D}(\bw) \bw_t$,
${\bf D}(\bw) \bw_x$ and ${\bf D}(\bw) \boldsymbol{S}(\bw)$ are
constructed by step \ref{step:first_cut_off}, and the vector ${\bf
  M}(u,\theta) {\bf D}(\bw) \bw_x$ is constructed by step \ref
{step:second_cut_off}.

By this procedure, the globally hyperbolic moment systems may be
regarded directly as an approximation of Boltzmann equation, instead
of the consequence of the regularization of Grad's moment systems. The
key point during this deduction is that the truncation has to be
applied instantly after every single operation, including both taking
the derivatives of $x$ or $t$, and the multiplication by velocity.


\section{A Uniform Framework for Generic Kinetic Equation}
\label{sec:uni_frmwk}
In the last section, a brand new approach to deduce the globally
hyperbolic moment system for Boltzmann equation is proposed. The
approach is to be extended to a uniform framework hereafter. With this
framework, the hyperbolicity of the obtained system can be intuitively
perceived and no intricate manipulation of the coefficient matrices
is needed. Based on the deduction for 1-dimensional problem in the
last section, we first extend this to the multi-dimensional Boltzmann
equation. And actually, this procedure can be made even more
general. We will reveal this fact step by step in this section.

\subsection{Multi-dimensional Boltzmann equation}
\label{sec:MD_Boltzmann}
Let us go back to the original Boltzmann equation \eqref
{eq:Boltzmann}. In order to include more general models, we consider a
general expansion of the distribution function:
\begin{equation} \label{eq:general_expansion}
f(t,\bx,\bxi) = \sum_{i = 0}^{+\infty} F_i(t,\bx)
  \varphi^{[\bdeta(t,\bx)]}_i(\bxi).
\end{equation}
Note that here we allow the basis functions $\varphi_i$ to be
dependent on some unknown vector function $\bdeta(t,\bx) =
(\eta_1(t,\bx), \cdots, \eta_n(t,\bx))$, and we suppose for any fixed
$\bdeta$, the basis functions $\{\varphi^{[\bdeta]}_i(\bxi)\}_{i=1}%
^{\infty}$ belong to some Hilbert space $\mathbb{H}^{[\bdeta]}$ with
real inner product $\langle \cdot, \cdot \rangle^{[\bdeta]}$. Due to
the parameter $\bdeta$ in the basis function, some constraints
\begin{equation} \label{eq:constraints}
r_j(\eta_1, \cdots, \eta_n, F_0, F_1, \cdots, F_m) = 0,
  \qquad j = 1,\cdots, n
\end{equation}
have to be imposed on the coefficients, and \eqref{eq:constraints} is
a non-degenerate algebraic system, which is provided by that its
Jacobian matrix
\[
\begin{pmatrix}
\pd{r_1}{\eta_1} & \cdots & \pd{r_1}{\eta_n} & \pd{r_1}{F_0} & \cdots & \pd{r_1}{F_m} \\[10pt]
\pd{r_2}{\eta_1} & \cdots & \pd{r_2}{\eta_n} & \pd{r_2}{F_0} & \cdots & \pd{r_2}{F_m} \\
\vdots & \ddots & \vdots & \vdots & \ddots & \vdots \\
\pd{r_n}{\eta_1} & \cdots & \pd{r_n}{\eta_n} & \pd{r_n}{F_0} & \cdots & \pd{r_n}{F_m}
\end{pmatrix},
\]
has full row rank. For example, in expansion \eqref{eq:1D_expansion},
we have $\bdeta = (u, \theta)$, and the constraints are $f_1 = f_2 =
0$, $m = 2$.

We suppose the functions $r_1, \cdots, r_n$ are continuously
differentiable. Owing to the non-degeneracy of \eqref{eq:constraints}
and the implicit function theorem, around any solution of \eqref
{eq:constraints}, there exists one $n$-dimensional subvector of
$(\eta_1, \cdots, \eta_n, F_0, \cdots, F_m)^T$ such that it can be
written as a function of else variables. Thus, in order to construct
a system with $(M + 1)$ equations, we have to consider $(M + n + 1)$
functions $\eta_1, \cdots, \eta_n, F_0, F_1, \cdots, F_M$, and then
``close'' the distribution function by writing all $F_i$, $i > M$ as
algebraic functions of $\eta_1, \cdots, \eta_n, F_0, F_1, \cdots,
F_M$. In order that \eqref{eq:constraints} remains non-degenerate, we
require $M \geqslant \max\{m,n\}$. Thus, the implicit function theorem
can be applied to eliminate $n$ unknowns, and we denote by $\bw$ the
vector whose components are the remaining $(M+1)$ functions, and the
distribution function $f$ is actually approximated by
\begin{equation} \label{eq:step1}
f(t,\bx,\bxi) \approx \tilde{f}(t,\bx,\bxi) =
  \sum_{i=0}^M F_i(t,\bx) \varphi_i^{[\bdeta]}(\bxi)
  + \sum_{i=M+1}^{+\infty} F_i(\bw(t,\bx)) \varphi_i^{[\bdeta]}(\bxi).
\end{equation}

Now we mimic the procedure in section \ref{sec:new_deduction} to
derive the equations for $\bw$. We first calculate the time and
spatial derivatives of $\tilde{f}$, together with the collision term:
\begin{equation} \label{eq:step2}
\begin{split}
& \frac{\partial \tilde{f}}{\partial (t,x_1, \cdots, x_D)} =
\sum_{i=0}^{+\infty} \left[
  G_i^t(t,\bx), G_i^{x_1}(t,\bx), \cdots, G_i^{x_D}(t,\bx)
\right] \varphi_i^{[\bdeta(t,\bx)]}(\bxi), \\
& \mathcal{S}(\tilde{f}) = \sum_{i=0}^{+\infty}
  Q_i(t,\bx) \varphi_i^{[\bdeta(t,\bx)]}(\bxi),
\end{split}
\end{equation}
where $G_i^t$ and $G_i^x$ are formulated as expressions of $\eta_1,
\cdots, \eta_n, F_0, F_1, \cdots, F_M$ and their derivatives. In order
to define the ``truncation'', we denote by $\mathbb{S}_M^{[\bdeta]}$
the linear subspace of $\mathbb{H}^{[\bdeta]}$ spanned by
$\{\varphi_0^{[\bdeta]}, \cdots, \varphi_M^{[\bdeta]}\}$, and then
approximate \eqref{eq:step2} by projecting it into this finite
dimensional space $\mathbb{S}_M^{[\bdeta]}$:
\begin{subequations} \label{eq:step3}
\begin{align}
\label{eq:lhs}
& \frac{\partial \tilde{f}}{\partial (t,x_1, \cdots, x_D)} \approx
\sum_{i=0}^M \left[
  \tilde{G}_i^t(t,\bx), \tilde{G}_i^{x_1}(t,\bx),
    \cdots, \tilde{G}_i^{x_D}(t,\bx)
\right] \varphi_i^{[\bdeta(t,\bx)]}(\bxi), \\
& \mathcal{S}(\tilde{f}) \approx \sum_{i=0}^M
  \tilde{Q}_i(t,\bx) \varphi_i^{[\bdeta(t,\bx)]}(\bxi).
\end{align}
\end{subequations}
If $\varphi_0^{[\bdeta]}, \varphi_1^{[\bdeta]}, \cdots, $ are
orthogonal to each other, then \eqref{eq:step3} is a direct truncation
of \eqref{eq:step2}, and otherwise, the coefficients in
\eqref{eq:step3} has the following form:
\begin{equation} \label{eq:projection}
\begin{split}
\tilde{G}_i^s(t,\bx) = \sum_{j=0}^{+\infty}
  a_{ij}(\bdeta(t,\bx)) G_j^s(t,\bx), \quad
\tilde{Q}_i(t,\bx) = \sum_{j=0}^{+\infty}
  a_{ij}(\bdeta(t,\bx)) Q_j(t,\bx), \quad \\
\qquad s = t,x_1,\cdots,x_D, \quad i=0,\cdots,M,
\end{split}
\end{equation}
Here $a_{ij}$ are only algebraic functions of $\bdeta$, and they do
not contain any derivatives of $\bdeta$. Let $\tilde{\bg}^s =
(\tilde{G}_0^s, \cdots, \tilde{G}_M^s)^T$, $s = t,x_1,\cdots,x_D$.
Using the chain rule and denoting all the unknown functions by $\bw$,
the vector $\tilde{\bg}^s$ has the form
\begin{equation} \label{eq:tilde_g}
\tilde{\bg}^s = {\bf D}(\bw) \frac{\partial \bw}{\partial s},
  \qquad s = t,x_1, \cdots, x_D.
\end{equation}
Based on \eqref{eq:lhs}, we approximate $\xi_k \partial_{x_k} f$ by
\begin{equation} \label{eq:step4}
\xi_k \frac{\partial \tilde{f}}{\partial x_k} \approx
  \sum_{i=0}^{+\infty} J_{k,i}(t,\bx) \varphi_i^{[\bdeta(t,\bx)]}(\bxi),
\qquad k = 1,\cdots,D,
\end{equation}
and then project it into the space $\mathbb{S}_M^{[\bdeta]}$ similar
as \eqref{eq:step3}:
\begin{equation} \label{eq:step5}
\xi_k \frac{\partial \tilde{f}}{\partial x_k} \approx
  \sum_{i=0}^{M} \tilde{J}_{k,i}(t,\bx)
    \varphi_i^{[\bdeta(t,\bx)]}(\bxi),
\qquad k = 1,\cdots,D,
\end{equation}
Finally, we put \eqref{eq:step5} and \eqref{eq:step3} into the
Boltzmann equation, and make the following system by vanishing the
coefficients of terms in the expansion:
\begin{equation} \label{eq:step6}
\tilde{G}_i^t(t,\bx) + \sum_{k=1}^D \tilde{J}_{k,i}(t,x) =
  \tilde{Q}_i(t,\bx), \qquad i = 0, \cdots, M.
\end{equation}
Similar as \eqref{eq:tilde_g}, we let $\boldsymbol{l}_k =
(\tilde{J}_{k,0},\cdots, \tilde{J}_{k,M})^T$ for $k = 1,\cdots,D$, and
then $\boldsymbol{l}_k$ has the form of
\begin{equation} \label{eq:l_k}
\boldsymbol{l}_k = {\bf M}_k(\bw) \tilde{\bg}^{x_k},
  \qquad k = 1,\cdots,D.
\end{equation}
Thus, the system \eqref{eq:step6} is formulated as
\begin{equation} \label{eq:abs_system}
{\bf D}(\bw) \frac{\partial \bw}{\partial t} + \sum_{k=1}^D
  {\bf M}_k(\bw) {\bf D}(\bw) \frac{\partial \bw}{\partial x_k}
= \boldsymbol{q}(\bw),
\end{equation}
where $\boldsymbol{q} = (\tilde{Q}_0, \cdots, \tilde{Q}_M)^T$. It is
easy to find that the system \eqref{eq:abs_system} is hyperbolic if
\begin{enumerate}
\item ${\bf D}(\bw)$ is invertible;
\item any linear combination of ${\bf M}_k(\bw)$ is real
  diagonalizable.
\end{enumerate}
We will point out later that these two conditions are fulfiled in very
extensive configurations.

In the above procedure, the vector of unknown functions $\bw$ is
defined only locally, as the result of the implicit function
theorem. However, in a lot of cases (e.g. the case in section
\ref{sec:1D_struct}), $\bw$ can be defined in the large, or even
globally. Thus the system \eqref{eq:abs_system} becomes globally
hyperbolic only if $\bw$ is properly chosen that ${\bf D}(\bw)$ is
invertible. Below, we give two examples to show how this abstract
procedure works.

\subsubsection{Example 1: 13-moment system}
\label{sec:13m}
When the moment method was proposed by Grad \cite{Grad}, he
immediately deduced a 13-moment system as its first application.
However, this system suffers a serious problem with its hyperbolicity.
It has been revealed in \cite{Grad13toR13} that the hyperbolicity of
this system cannot be ensured even around the equilibrium. To our
current knowledge, there has no a 13-moment system with global
hyperbolicity yet. Here we are going to deduce a globally hyperbolic
13-moment system using the framework introduced above.

In and only in this part, we will adopt the Einstein summation
convention to simplify the notations. Since 13-moment system is based
on 3D spatial and velocity spaces, the subscripts run over from $1$ to
$3$. Following Grad's work, we first approximate the distribution
function as
\begin{equation} \label{eq:13m_expansion}
\begin{split}
& f(t,\bx,\bxi) \approx f_{|13}(t,\bx,\bxi) =
  \rho(t,\bx) w^{[\theta(t,\bx)]}(\bC(t,\bx)) +
  \kappa_i(t,\bx) \mathscr{H}^{[\theta(t,\bx)]}_i (\bC(t,\bx)) \\
& \hspace{2.5cm} +
  \frac{1}{2} \rho(t,\bx) \theta_{\langle ij \rangle}(t,\bx)
    \mathscr{H}^{[\theta(t,\bx)]}_{ij}(\bC(t,\bx)) +
  \frac{1}{5} q_j(t,\bx)
    \mathscr{H}^{[\theta(t,\bx)]}_{iij}(\bC(t,\bx)),
\end{split}
\end{equation}
where $\theta_{ij}$ is a symmetric tensor, and
\begin{equation}
\theta_{\langle ij \rangle} = \theta_{ij} - \delta_{ij} \theta_{kk}/3,
  \qquad \bC = \bxi - \bu.
\end{equation}
The basis functions are defined as
\begin{equation}
w^{[\theta]}(\bC) = \frac{1}{m_g (2\pi\theta)^{3/2}}
  \exp \left( -\frac{C_k C_k}{2\theta} \right), \qquad
\mathscr{H}^{[\theta]}_{i_1 \cdots i_n}(\bC) =
  (-1)^n \frac{\partial^n w^{[\theta]}(\bC)}
    {\partial C_{i_1} \cdots \partial C_{i_n}}.
\end{equation}
In this case, $\bdeta = (u_1, u_2, u_3, \theta)$, and the constraints
\eqref{eq:constraints} are
\begin{equation}
\kappa_1(t,\bx) = \kappa_2(t,\bx) = \kappa_3(t,\bx) = 0,
  \qquad \theta_{kk}(t,\bx) = 3\theta(t,\bx).
\end{equation}
The inner product is defined by
\begin{equation} \label{eq:13m_ip}
\langle g_1, g_2 \rangle^{[\bdeta]} = \int_{\mathbb{R}^D}
  \frac{g_1(\bxi) g_2(\bxi)}{w^{[\theta]}(\bxi - \bu)} \dd \bxi.
\end{equation}
The vector $\bw$ of unknowns is chosen as
\begin{equation}
\bw = (\rho, u_1, u_2, u_3, \theta_{11}, \theta_{22}, \theta_{33},
  \theta_{12}, \theta_{13}, \theta_{23}, q_1, q_2, q_3)^T.
\end{equation}

Taking derivatives on both sides of \eqref{eq:13m_expansion}, we have
\begin{equation} \label{eq:13m_diff}
\begin{split}
\frac{\partial f_{|13}}{\partial s} & =
  \frac{\partial \rho}{\partial s} w^{[\theta]}(\bC) +
  \rho \frac{\partial u_i}{\partial s} \mathscr{H}^{[\theta]}_i(\bC) +
  \frac{1}{2}\rho \frac{\partial\theta}{\partial t}
    \mathscr{H}^{[\theta]}_{jj}(\bC) +
  \frac{1}{2}\rho \theta_{\langle ij \rangle}
    \frac{\partial u_k}{\partial s} \mathscr{H}^{[\theta]}_{ijk}(\bC) \\
& \quad +
  \frac{1}{5} \frac{\partial q_j}{\partial s}
    \mathscr{H}^{[\theta]}_{iij}(\bC) +
  \frac{1}{4} \rho \theta_{\langle ij \rangle}
    \frac{\partial \theta}{\partial s}
    \mathscr{H}^{[\theta]}_{ijkk}(\bC) +
  \frac{1}{5} q_j \frac{\partial u_k}{\partial s}
    \mathscr{H}^{[\theta]}_{iijk}(\bC) +
  \frac{1}{10} q_j \frac{\partial \theta}{\partial s}
    \mathscr{H}^{[\theta]}_{iijkk}(\bC),
\end{split}
\end{equation}
for $s = t,x_1,x_2,x_3$.  Let $\boldsymbol{B}$ be the vector whose
components are the following $13$ basis functions:
\begin{equation}
\begin{split}
\boldsymbol{B} = \Big( & w^{[\theta]}(\bC), \\
& \mathscr{H}^{[\theta]}_1(\bC),
  \mathscr{H}^{[\theta]}_2(\bC),
  \mathscr{H}^{[\theta]}_3(\bC), \\
& \mathscr{H}^{[\theta]}_{11}(\bC), \mathscr{H}^{[\theta]}_{22}(\bC),
  \mathscr{H}^{[\theta]}_{33}(\bC), \mathscr{H}^{[\theta]}_{12}(\bC),
  \mathscr{H}^{[\theta]}_{13}(\bC), \mathscr{H}^{[\theta]}_{23}(\bC), \\
& \mathscr{H}^{[\theta]}_{ii1}(\bC),
  \mathscr{H}^{[\theta]}_{ii2}(\bC),
  \mathscr{H}^{[\theta]}_{ii3}(\bC)
\Big)^T,
\end{split}
\end{equation}
and denote by $\mathbb{S}_{13}^{[\bdeta]}$ the linear space spanned by
these basis functions. Then we can approximate \eqref{eq:13m_diff} by
projecting it into the $13$-dimensional space
$\mathbb{S}_{13}^{[\bdeta]}$. The result is
\begin{equation} \label{eq:13m_cut_off}
\begin{split}
\frac{\partial f_{|13}}{\partial s} & \approx
  \frac{\partial \rho}{\partial s} w^{[\theta]}(\bC) +
  \rho \frac{\partial u_i}{\partial s} \mathscr{H}^{[\theta]}_i(\bC) \\
& \qquad + \frac{1}{2} \left(
    \rho \frac{\partial \theta_{ij}}{\partial s} +
    \frac{\partial \rho}{\partial s} \theta_{\langle ij \rangle}
  \right) \mathscr{H}^{[\theta]}_{ij}(\bC) +
  \frac{1}{5} \left(
    \rho \theta_{\langle jk \rangle} \frac{\partial u_k}{\partial s} +
    \frac{\partial q_j}{\partial s}
  \right) \mathscr{H}^{[\theta]}_{iij}(\bC).
\end{split}
\end{equation}
The above approximation has the form
\begin{equation}
\frac{\partial f_{|13}}{\partial s} \approx
  \boldsymbol{B}^T {\bf D}(\bw) \frac{\partial \bw}{\partial s},
\end{equation}
where ${\bf D}(\bw)$ is the very matrix in \eqref{eq:tilde_g}. It is
found from \eqref{eq:13m_cut_off} that ${\bf D}(\bw)$ is a lower
triangular square matrix with its diagonal entries as
\begin{displaymath}
1, \rho, \rho, \rho,
\frac{1}{2} \rho, \frac{1}{2} \rho, \frac{1}{2} \rho,
\rho, \rho, \rho, \frac{1}{5}, \frac{1}{5}, \frac{1}{5}.
\end{displaymath}
Therefore ${\bf D}(\bw)$ is invertible.

Now we calculate the convection term $\xi_k \partial_{x_k} f$ from
\eqref{eq:13m_cut_off}:
\begin{equation} \label{eq:13m_convection}
\begin{split}
\xi_k \frac{\partial f_{|13}}{\partial x_k} & \approx \left(
  u_k \frac{\partial \rho}{\partial x_k} +
  \rho \frac{\partial u_k}{\partial x_k}
\right) w^{[\theta]}(\bC) + \left(
  \rho u_k \frac{\partial u_i}{\partial x_k} +
  \rho \frac{\partial \theta_{ik}}{\partial x_k} +
  \theta_{ik} \frac{\partial \rho}{\partial x_k}
\right) \mathscr{H}^{[\theta]}_i (\bC) \\
& \qquad + \Bigg(
  \frac{1}{2} \rho u_k \frac{\partial \theta_{ij}}{\partial x_k} +
  \frac{1}{2} u_k \theta_{\langle ij \rangle}
    \frac{\partial \rho}{\partial u_k} +
  \rho \theta \frac{\partial u_i}{\partial x_j} \\
& \qquad \qquad +
  \frac{2}{5} \rho \theta_{\langle ik \rangle}
    \frac{\partial u_k}{\partial x_j} +
  \frac{1}{5} \rho \theta_{\langle kl \rangle} \delta_{ij}
    \frac{\partial u_k}{\partial x_l} +
  \frac{2}{5} \frac{\partial q_i}{\partial x_j} +
  \frac{1}{5} \delta_{ij} \frac{\partial q_k}{\partial x_k}
\Bigg) \mathscr{H}^{[\theta]}_{ij}(\bC) \\
& \qquad + \frac{1}{5} u_k \left(
  \rho \theta_{\langle jl \rangle} \frac{\partial u_l}{\partial x_k} +
  \frac{\partial q_j}{\partial x_k}
\right) \mathscr{H}^{[\theta]}_{iij}(\bC) + \frac{1}{2} \theta \left(
  \rho \frac{\partial \theta_{ij}}{\partial x_k} +
  \theta_{\langle ij \rangle} \frac{\partial \rho}{\partial x_k}
\right) \mathscr{H}^{[\theta]}_{ijk}(\bC) \\
& \qquad + \frac{1}{5} \theta \left(
  \rho \theta_{\langle jl \rangle} \frac{\partial u_l}{\partial x_k} +
  \frac{\partial q_j}{\partial x_k}
\right) \mathscr{H}^{[\theta]}_{iijk}(\bC).
\end{split}
\end{equation}
Projecting \eqref{eq:13m_convection} into
$\mathbb{S}_{13}^{[\bdeta]}$, one obtains
\begin{equation}
\begin{split}
\xi_k \frac{\partial f_{|13}}{\partial x_k} & \approx \left(
  u_k \frac{\partial \rho}{\partial x_k} +
  \rho \frac{\partial u_k}{\partial x_k}
\right) w^{[\theta]}(\bC) + \left(
  \rho u_k \frac{\partial u_i}{\partial x_k} +
  \rho \frac{\partial \theta_{ik}}{\partial x_k} +
  \theta_{ik} \frac{\partial \rho}{\partial x_k}
\right) \mathscr{H}^{[\theta]}_i (\bC) \\
& \qquad + \Bigg(
  \frac{1}{2} \rho u_k \frac{\partial \theta_{ij}}{\partial x_k} +
  \frac{1}{2} u_k \theta_{\langle ij \rangle}
    \frac{\partial \rho}{\partial u_k} +
  \rho \theta \frac{\partial u_i}{\partial x_j} \\
& \qquad \qquad +
  \frac{2}{5} \rho \theta_{\langle ik \rangle}
    \frac{\partial u_k}{\partial x_j} +
  \frac{1}{5} \rho \theta_{\langle kl \rangle} \delta_{ij}
    \frac{\partial u_k}{\partial x_l} +
  \frac{2}{5} \frac{\partial q_i}{\partial x_j} +
  \frac{1}{5} \delta_{ij} \frac{\partial q_k}{\partial x_k}
\Bigg) \mathscr{H}^{[\theta]}_{ij}(\bC) \\
& \qquad + \frac{1}{5} \left(
  u_k \rho \theta_{\langle jl \rangle}
    \frac{\partial u_l}{\partial x_k} +
  u_k \frac{\partial q_j}{\partial x_k} +
  \rho \theta \frac{\partial \theta_{jk}}{\partial x_k} +
  \frac{1}{2} \rho \theta \frac{\partial \theta_{kk}}{\partial x_j} +
  \theta\theta_{\langle jk \rangle} \frac{\partial \rho}{\partial x_k}
\right) \mathscr{H}^{[\theta]}_{iij}(\bC).
\end{split}
\end{equation}
Just like \eqref{eq:l_k}, the above approximation can be rewritten as
\begin{equation}
\xi_k \frac{\partial f_{|13}}{\partial x_k} \approx
  \boldsymbol{B}^T {\bf M}_k(\bw) {\bf D}(\bw)
  \frac{\partial \bw}{\partial x_k},
\end{equation}
where ${\bf M}_k(\bw)$, $k=1,2,3$ are matrices with size $13 \times
13$. Here we only give the expression of ${\bf M}_1$:
\begin{equation}
{\bf M}_1(\bw) = \left( \begin{array}{ccccccccccccc}
  u_1 & 1 & 0 & 0 & 0 & 0 & 0 & 0 & 0 & 0 & 0 & 0 & 0 \\
  \theta  & u_1 & 0 & 0 & 2 & 0 & 0 & 0 & 0 & 0 & 0 & 0 & 0 \\
  0 & 0 & u_1 & 0 & 0 & 0 & 0 & 1 & 0 & 0 & 0 & 0 & 0 \\
  0 & 0 & 0 & u_1 & 0 & 0 & 0 & 0 & 1 & 0 & 0 & 0 & 0 \\
  0 & \theta  & 0 & 0 & u_1 & 0 & 0 & 0 & 0 & 0 & 3 & 0 & 0 \\
  0 & 0 & 0 & 0 & 0 & u_1 & 0 & 0 & 0 & 0 & 1 & 0 & 0 \\
  0 & 0 & 0 & 0 & 0 & 0 & u_1 & 0 & 0 & 0 & 1 & 0 & 0 \\
  0 & 0 & \theta  & 0 & 0 & 0 & 0 & u_1 & 0 & 0 & 0 & 2 & 0 \\
  0 & 0 & 0 & \theta  & 0 & 0 & 0 & 0 & u_1 & 0 & 0 & 0 & 2 \\
  0 & 0 & 0 & 0 & 0 & 0 & 0 & 0 & 0 & u_1 & 0 & 0 & 0 \\
  0 & 0 & 0 & 0 & \frac{3 \theta}{5} & \frac{\theta}{5} &
    \frac{\theta}{5} & 0 & 0 & 0 & u_1 & 0 & 0 \\
  0 & 0 & 0 & 0 & 0 & 0 & 0 & \frac{\theta}{5} & 0 & 0 & 0 & u_1 & 0 \\
  0 & 0 & 0 & 0 & 0 & 0 & 0 & 0 & \frac{\theta}{5} & 0 & 0 & 0 & u_1
\end{array} \right),
\end{equation}
and its real diagonalizability can be directly verified\footnote{We
  verify the real diagonalizability by validating that $p({\bf M}_1) =
  0$ for $p(\lambda) = (\lambda-u_1)[5(\lambda-u_1)^2 - 7\theta] \cdot
  [5(\lambda-u_1)^4 - 26\theta (\lambda-u_1)^2 + 15\theta^2]$, which
  is the minimal polynomial of ${\bf M}_1$. Apparently $p(\lambda)$
  factors completely into distinct linear factor when $\theta >
  0$.}. Then the rotational invariance of the 13-moment expansion
yields the real diagonalizability of any linear combination of ${\bf
  M}_k(\bw)$, $k=1,2,3$. For Maxwell molecules, by expansion and
projection, we have the following approximation of the collision term:
\begin{equation}
\mathcal{S}(f) \approx \boldsymbol{B}^T \boldsymbol{q}(\bw)
  = -\frac{3\rho}{m_g} \chi^{(2,3)} \left(
    \frac{1}{2} \rho \theta_{\langle ij \rangle}
      \mathscr{H}^{[\theta]}_{ij}(\bC) +
    \frac{2}{15} q_j \mathscr{H}^{[\theta]}_{iij}(\bC)
  \right),
\end{equation}
where $\chi^{(2,3)}$ is a constant and we refer the readers to
\cite{Struchtrup} for its definition. Finally we have the following
globally hyperbolic 13-moment system:
\begin{equation}
\frac{\partial \bw}{\partial t} +
  [{\bf D}(\bw)]^{-1} {\bf M}_k(\bw) {\bf D}(\bw)
    \frac{\partial \bw}{\partial x_k} =
[{\bf D}(\bw)]^{-1} \boldsymbol{q}(\bw).
\end{equation}
In explicit formation, the system is as
\begin{equation}
\begin{split}
& \frac{\mathrm{d} \rho}{\mathrm{d} t} +
  \rho \frac{\partial u_k}{\partial x_k} = 0, \\
& \frac{\mathrm{d} u_i}{\mathrm{d} t} +
  \frac{\partial \theta_{ik}}{\partial x_k} +
  \frac{\theta_{ik}}{\rho} \frac{\partial \rho}{\partial x_k} = 0,
    \quad i = 1,2,3, \\
& \frac{\mathrm{d} \theta_{ij}}{\mathrm{d} t} -
  \theta_{ij} \frac{\partial u_k}{\partial x_k} +
  \frac{3}{5} \theta \left(
    \frac{\partial u_i}{\partial x_j} +
    \frac{\partial u_j}{\partial x_i} +
    \delta_{ij} \frac{\partial u_k}{\partial x_k}
  \right) + \frac{2}{5} \left(
    \theta_{ik} \frac{\partial u_k}{\partial x_j} +
    \theta_{jk} \frac{\partial u_k}{\partial x_i} +
    \delta_{ij} \theta_{kl} \frac{\partial u_k}{\partial x_l}
  \right) \\
& \qquad + \frac{2}{5\rho} \left(
    \frac{\partial q_i}{\partial x_j} +
    \frac{\partial q_j}{\partial x_i} +
    \delta_{ij} \frac{\partial q_k}{\partial x_k}
  \right) =
    -\frac{3\rho}{m_g} \chi^{(2,3)} \theta_{\langle ij \rangle},
    \quad i,j = 1,2,3, \\
& \frac{\mathrm{d} q_j}{\mathrm{d} t} -
  \theta_{\langle ij \rangle} \theta_{\langle ik \rangle}
    \frac{\partial \rho}{\partial x_k} -
  \rho \theta_{ij} \frac{\partial \theta_{ik}}{\partial x_k} +
  2\rho \theta \frac{\partial \theta_{ij}}{\partial x_i} +
  \frac{1}{2} \rho \theta \frac{\partial \theta_{kk}}{\partial x_j} =
    -\frac{2\rho}{m_g} \chi^{(2,3)} q_j, \quad j = 1,2,3.
\end{split}
\end{equation}

\subsubsection{Example 2: full moment theories}
In this section, we are going to reproduce the work of \cite{Fan_new}
and \cite{ANRxx}, where two types of globally hyperbolic moment
systems are derived. Here we adopt the following approximation to the
distribution function:
\begin{equation}
f(t,\bx,\bxi) \approx \tilde{f}(t,\bx,\bxi) =
  \sum_{\substack{\balpha \in \bbN^D\\ |\balpha| \leqslant M}}
    f_{\balpha}(t,\bx) \mathscr{H}_{\balpha}^{[{\bf\Theta}]}
    \left( \bC(t,\bx) \right),
\end{equation}
where
\begin{equation}
{\bf \Theta} = {\bf \Theta}^T = (\theta_{ij})_{D \times D}, \qquad
  \bC = \bxi - \bu(t,\bx).
\end{equation}
The basis functions are defined as
\begin{equation}
w^{[{\bf \Theta}]}(\bC) = \frac{1}{m_g \sqrt{\det(2\pi {\bf \Theta})}}
  \exp \left( -\frac{1}{2} \bC^T {\bf \Theta}^{-1} \bC \right), \qquad
\mathscr{H}_{\balpha}^{[{\bf \Theta}]}(\bC) =
  (-1)^{|\balpha|} \frac{\partial^{|\balpha|}}{\partial \bC^{\alpha}}
  w^{[{\bf \Theta}]}(\bC).
\end{equation}
Here $\bdeta = (u_1, \cdots, u_D, \theta_{11}, \cdots, \theta_{1D},
\theta_{22}, \cdots, \theta_{2D}, \cdots, \theta_{DD}),$
and thus $D(D+3)/2$ constraints are needed. These constraints will be
discussed later. We first define the inner product $\langle \cdot,
\cdot \rangle^{[\bdeta]}$ as
\begin{equation} \label{eq:ip}
\langle g_1, g_2 \rangle^{[\bdeta]} = \int_{\mathbb{R}^D}
  \frac{g_1(\bxi) g_2(\bxi)}{w^{[{\bf \Theta}]}(\bxi - \bu)}
\dd \bxi.
\end{equation}
Mimicing the abstract procedure, we calculate $\partial_{t,\bx} f$ and
perform the projection to space $\mathbb{S}_M^{[\bdeta]} =
\mathrm{span} \{ \mathscr{H}_{\balpha}^{[{\bf \Theta}]}(\bC) \mid
\balpha \in \bbN^D, |\balpha| \leqslant M \}$. The result is
\begin{equation} \label{eq:diff_expansion}
\frac{\partial \tilde{f}}{\partial s} \approx
  \sum_{\substack{\balpha \in \bbN^D\\ |\balpha| \leqslant M}} \left(
    \frac{\partial f_{\balpha}}{\partial s} +
    \sum_{i=1}^D \frac{\partial u_i}{\partial s} f_{\balpha-\be_i} +
    \frac{1}{2} \sum_{i,j=1}^D
      \frac{\partial \theta_{ij}}{\partial s} f_{\balpha-\be_i-\be_j}
  \right) \mathscr{H}^{[{\bf\Theta}]}_{\balpha}(\bC),
\end{equation}
where $s = t,x_1,\cdots,x_D$. Now we calculate the convection term and
apply the projection again:
\begin{equation}
\begin{split}
\xi_k \frac{\partial \tilde{f}}{\partial x_k} & \approx
\sum_{\substack{\balpha \in \bbN^D\\ |\balpha| \leqslant M}} 
  \Bigg[ u_k \left(
    \frac{\partial f_{\balpha}}{\partial x_k} +
    \sum_{i=1}^D \frac{\partial u_i}{\partial x_k} f_{\balpha-\be_i} +
    \frac{1}{2} \sum_{i,j=1}^D
      \frac{\partial\theta_{ij}}{\partial x_k} f_{\balpha-\be_i-\be_j}
  \right) \\
& \quad + (1 - \delta_{|\balpha|,M}) (\alpha_k + 1) \left(
    \frac{\partial f_{\balpha+\be_k}}{\partial x_k} +
    \sum_{i=1}^D \frac{\partial u_i}{\partial x_k}
      f_{\balpha-\be_i+\be_k} +
    \frac{1}{2} \sum_{i,j=1}^D \frac{\partial\theta_{ij}}{\partial x_k}
      f_{\balpha-\be_i-\be_j+\be_k}
  \right) \\
& \quad + \sum_{l=1}^D \theta_{kl} \left(
    \frac{\partial f_{\balpha-\be_l}}{\partial x_k} +
    \sum_{i=1}^D \frac{\partial u_i}{\partial x_k}
      f_{\balpha-\be_i-\be_l} +
    \frac{1}{2} \sum_{i,j=1}^D \frac{\partial\theta_{ij}}{\partial x_k}
      f_{\balpha-\be_i-\be_j-\be_l}
  \right) \Bigg] \mathscr{H}_{\balpha}^{[\bdeta]}(\bC), \\
& \hspace{11.5cm} k=1,\cdots,D.
\end{split}
\end{equation}
Supposing the projected right hand side has the following
approximation:
\begin{equation}
\mathcal{S}(f) \approx
  \sum_{\substack{\balpha \in \bbN^D\\ |\balpha| \leqslant M}}
    Q_{\balpha}(t,\bx) \mathscr{H}_{\balpha}^{[\bdeta]}(\bC),
\end{equation}
we then have the following moment system:
\begin{equation} \label{eq:full_moment}
\begin{split}
& \frac{\partial f_{\balpha}}{\partial t} +
\sum_{i=1}^D \frac{\partial u_i}{\partial t} f_{\balpha-\be_i} +
\frac{1}{2} \sum_{i,j=1}^D
  \frac{\partial\theta_{ij}}{\partial t} f_{\balpha-\be_i-\be_j} \\
& \quad + \sum_{k=1}^D \Bigg[ u_k \left(
    \frac{\partial f_{\balpha}}{\partial x_k} +
    \sum_{i=1}^D \frac{\partial u_i}{\partial x_k} f_{\balpha-\be_i} +
    \frac{1}{2} \sum_{i,j=1}^D
      \frac{\partial\theta_{ij}}{\partial x_k} f_{\balpha-\be_i-\be_j}
  \right) \\
& \quad \quad + (1 - \delta_{|\balpha|,M}) (\alpha_k + 1) \left(
    \frac{\partial f_{\balpha+\be_k}}{\partial x_k} +
    \sum_{i=1}^D \frac{\partial u_i}{\partial x_k}
      f_{\balpha-\be_i+\be_k} +
    \frac{1}{2} \sum_{i,j=1}^D \frac{\partial\theta_{ij}}{\partial x_k}
      f_{\balpha-\be_i-\be_j+\be_k}
  \right) \\
& \quad \quad + \sum_{l=1}^D \theta_{kl} \left(
    \frac{\partial f_{\balpha-\be_l}}{\partial x_k} +
    \sum_{i=1}^D \frac{\partial u_i}{\partial x_k}
      f_{\balpha-\be_i-\be_l} +
    \frac{1}{2} \sum_{i,j=1}^D \frac{\partial\theta_{ij}}{\partial x_k}
      f_{\balpha-\be_i-\be_j-\be_l}
  \right) \Bigg] = Q_{\balpha}, \\
& \hspace{11cm} \balpha \in \mathbb{N}^D, \quad |\balpha| \leqslant M.
\end{split}
\end{equation}

Before we check the hyperbolicity of this moment system, we have to
give the constraints \eqref{eq:constraints}. Below we consider two
cases, which correspond to the systems in \cite{Fan_new} and
\cite{ANRxx} respectively.
\subparagraph{First case: classic Hermite expansion} We require the
parameter $\bdeta$ and the coefficients to satisfy
\begin{equation}
\begin{split}
& f_{\be_j} = 0, \quad j = 1,\cdots,D,
  \qquad \qquad \sum_{j=1}^D f_{2\be_j} = 0, \\
& \theta_{11} = \theta_{22} = \cdots = \theta_{DD},
  \qquad \qquad \theta_{ij} = 0 \quad \text{if} \quad i \neq j.
\end{split}
\end{equation}
Define $\bw_m$ to be the vector whose components are all $f_{\balpha}$
with $|\balpha| = m$, and let
\begin{equation}
g_{\be_i + \be_j} = \frac{1}{2} \delta_{ij} \theta_{ij} +
  f_{\be_i + \be_j}, \quad i,j = 1,\cdots,D,
\qquad \qquad \bg = (g_{\balpha})|_{|\balpha| = 2}.
\end{equation}
Then the unknown vector $\bw$ can be chosen as
\begin{equation}
\bw = (f_{\bf 0}, u_1, \cdots, u_D, \bg, \bw_3, \cdots, \bw_M)^T.
\end{equation}
It is not difficult to observe from \eqref{eq:diff_expansion} that the
matrix ${\bf D}(\bw)$ (see eq. \eqref{eq:tilde_g}) is a triangular
matrix with all its diagonal entries nonzero, and thus the first
hyperbolicity condition (invertibility of $\bf D$) is fulfilled. The
second hyperbolicity condition (real diagonalizability of linear
combinations of ${\bf M}_k$) can actually be satisfied in a very
general case, as will be discussed in section \ref{sec:hyp_condition}.
We claim that in this case, the hyperbolic moment system is exactly
what is proposed in \cite{Fan_new}, which can be verified by direct
comparison of their explicit expressions.

\subparagraph{Second case: generalized Hermite expansion} We require
the parameter $\bdeta$ and the coefficients to satisfy
\begin{equation} \label{eq:gen_Hermite}
f_{\be_j} = 0, \quad j = 1,\cdots,D,
  \qquad \qquad f_{\balpha} = 0 \quad \text{if} \quad |\balpha| = 2,
\end{equation}
and define $\bw = (f_{\bf 0}, \bdeta, \bw_3, \cdots, \bw_M)^T$. In
this case, the matrix ${\bf D}(\bw)$ is also a triangular matrix with
all the diagonal entries nonzero, and thus is invertible. The real
diagonalizability of the linear combinations of ${\bf M}_k$ will also
be discussed in section \ref{sec:hyp_condition}, and we only claim
here that this condition is also fulfilled, which leads to the
global hypebolicity of \eqref{eq:full_moment}. With the constraints
\eqref{eq:gen_Hermite}, the above procedure reproduces the systems
derived in \cite{ANRxx}.

\subsection{Generic kinetic equation}
Here now on we are trying to extend the framework above to even
generic kinetic equation formulated as
\begin{equation} \label{eq:general_ke}
\pd{f}{t} + \bv(\bxi) \cdot \pd{f}{\bx} = \mathcal{S}(f).
\end{equation}
Such equation arises from such as Liouville equation of Hamiltonian
system, radiative transport or relativistic kinetic theory. For
example, let us consider a general Hamiltonian system with its
Hamiltonian to be $\mathcal{H}(t, \bx, \bxi)$. The probability density
function $f(t, \bx, \bxi)$ in the configuration space is governed by
the Liouville equation as
\begin{equation}\label{eq:liouville}
\pd{f}{t} + \pd{\cal H}{\bxi} \cdot \pd{f}{\bx} - \pd{\cal H}{\bx}
\cdot \pd{f}{\bxi} = 0,
\end{equation}
where the ordinate acceleration term $\partial_{\bx} {\cal H} \cdot
\partial_{\bxi} f$ is corresponding to $\mathcal{S}(f)$ in
\eqref{eq:general_ke}. Often no derivatives respected to $\bx$ is
involved in this term. The convective velocity $\bv(\bxi)$ in
\eqref{eq:general_ke} is then specified as $\partial_{\bxi}
\mathcal{H}$. For the generic kinetic equation \eqref{eq:general_ke},
due to its similarity to the Boltzmann equation, the procedure in
section \ref{sec:MD_Boltzmann} can be applied with only slightly
revision. Below we first give an outline of the deduction of the
hyperbolic system.
\begin{enumerate}
\item Expand the function $f$ into an infinite series
  \eqref{eq:general_expansion}, and specify the constraints
  \eqref{eq:constraints} on the coefficients and parameters.
\item For a sufficiently large $M$, use the constraints
  \eqref{eq:constraints} to eliminate $n$ items from $\eta_1, \cdots,
  \eta_n, F_0, F_1, \cdots, F_M$, and denote by $\bw$ the vector whose
  components are the remaining items.
\item Approximate $f$ by writing all $F_i$ with $i > M$ as functions
  of $\bw$ \eqref{eq:step1}.
\item Expand the time and spatial derivatives together with the right
  hand side into the series \eqref{eq:step2}.
\item Through the inner product $\langle \cdot, \cdot
  \rangle^{[\bdeta]}$, project \eqref{eq:step2} onto the finite
  dimensional space $\mathbb{S}_M^{[\bdeta]} = \{\varphi_0^{[\bdeta]},
  \cdots, \varphi_M^{[\bdeta]}\}$ (see \eqref{eq:step3}---\eqref
  {eq:tilde_g}), which provides us the approximation
  \[
  \pd{\tilde{f}}{s} \approx \left( \varphi_0^{[\bdeta(t,\bx)]}(\bxi),
    \cdots, \varphi_M^{[\bdeta(t,\bx)]}(\bxi) \right) {\bf D}(\bw)
  \frac{\partial \bw}{\partial s}, \qquad s = t,x_1, \cdots, x_D,
  \]
  and
  \[
  \mathcal{S}(\tilde{f}) \approx \sum_{i=0}^{M} \tilde{Q}_i(t,\bx)
  \varphi_i^{[\bdeta(t,\bx)]}(\bxi).
 \]
\item Calculate the convection term $v_k(\bxi) \partial_{x_k} f$ based
  on \eqref{eq:lhs}, and expand the result into series:
  \begin{equation} \label{eq:step4_new}
  v_k(\bxi) \frac{\partial \tilde{f}}{\partial x_k} \approx
    \sum_{i=0}^{+\infty} J_{k,i}(t,\bx)
    \varphi_i^{[\bdeta(t,\bx)]}(\bxi),
  \qquad k = 1,\cdots,D.
  \end{equation}
\item Project \eqref{eq:step4_new} onto $\mathbb{S}_M^{[\bdeta]}$:
  \begin{equation} \label{eq:step5_new}
  v_k(\bxi) \frac{\partial \tilde{f}}{\partial x_k} \approx
    \sum_{i=0}^M \tilde{J}_{k,i}(t,\bx)
      \varphi_i^{[\bdeta(t,\bx)]}(\bxi),
  \qquad k = 1,\cdots,D.
  \end{equation}
  Then by \eqref{eq:l_k} we also have
  \[
  \tilde{J}_{k,i} = {\bf M}_k(\bw) {\bf D}(\bw) \pd{\bw}{x_k}, \qquad
  k = 1,\cdots,D.
  \]
\item Put \eqref{eq:step5_new} and \eqref{eq:lhs} into
  \eqref{eq:general_ke}, and finally obtain the system
  as \eqref{eq:step6}, 
  \[
  {\bf D}(\bw) \frac{\partial \bw}{\partial t} + \sum_{k=1}^D {\bf
    M}_k(\bw) {\bf D}(\bw) \frac{\partial \bw}{\partial x_k} =
  \boldsymbol{q}(\bw),
  \]
  which is a reduced model for the generic kinetic equation
  \eqref{eq:general_ke}, where $\boldsymbol{q} = (\tilde{Q}_0, \cdots,
  \tilde{Q}_M)^T$.
\end{enumerate}
Again, the matrices ${\bf D}(\bw)$ and ${\bf M}_k(\bw)$ appear at the
same locations in the above deduction, and the system is hyperbolic
if ${\bf D}(\bw)$ is invertible and any linear combination of ${\bf
  M}_k(\bw)$ is real diagonalizable.

Below, we will first look into these two conditions leading to
hyperbolicity, and then provide the $M_N$ model for radiative
transport as the application of the framework.

\subsubsection{On the conditions to hyperbolicity}
\label{sec:hyp_condition}
In order to give the expressions of the matrices ${\bf D}(\bw)$ and
${\bf M}_k(\bw)$, we denote $\{\psi_0^{[\bdeta]}, \cdots,
\psi_M^{[\bdeta]}\}$ as an orthogonal basis of the $(M+1)$-dimensional
space $\mathbb{S}_M^{[\bdeta]}$, which satisfies
\begin{equation}
\langle \psi_i^{[\bdeta]}, \psi_j^{[\bdeta]} \rangle^{[\bdeta]}
  = \delta_{ij}, \qquad i,j=0,\cdots,M.
\end{equation}
Let $\boldsymbol{b}^{[\bdeta]} = (\varphi_0^{[\bdeta]}, \cdots,
\varphi_M^{[\bdeta]})^T$ and $\tilde{\boldsymbol{b}}^{[\bdeta]} =
(\psi_0^{[\bdeta]}, \cdots, \psi_M^{[\bdeta]})^T$. Clearly, the
orthogonal basis $\tilde{\boldsymbol{b}}^{[\bdeta]}$ can be obtained
by Schmidt's orthogonalization from $\boldsymbol{b}^{[\bdeta]}$. Then
there exists a non-singular matrix ${\bf T}^{[\bdeta]}$ such that
$\tilde{\boldsymbol{b}}^{[\bdeta]} = {\bf T}^{[\bdeta]}
\boldsymbol{b}^{[\bdeta]}$ \footnote{Precisely, the matrix ${\bf
    T}^{[\bdeta]}$ can be a lower triangular matrix provided by
  Schmidt's orthogonalization.}.  According to \eqref{eq:step1}, the
function $\tilde{f}$ can actually be written as a function of $\bw$
and $\bxi$, which is denoted by $\tilde{f}(t,\bx,\bxi) = F(\bw(t,\bx);
\bxi)$. Thus, the matrices ${\bf D}(\bw)$ and ${\bf M}_k(\bw)$ can be
written as
\begin{subequations}
\begin{align}
\label{eq:D}
& {\bf D} = {\bf T}^{[\bdeta]} \tilde{\bf D},
&& \tilde{\bf D} = \left \langle
  \tilde{\boldsymbol{b}}^{[\bdeta]},
  \left( \frac{\partial F}{\partial \bw} \right)^T
\right \rangle^{[\bdeta]}, \\
\label{eq:M_k}
& {\bf M}_k = {\bf T}^{[\bdeta]} \tilde{\bf M}_k
  \left( {\bf T}^{[\bdeta]} \right)^{-1}, \quad
&& \tilde{\bf M}_k = \left \langle
  \tilde{\boldsymbol{b}}^{[\bdeta]},
  v_k(\bxi) \left( \tilde{\boldsymbol{b}}^{[\bdeta]} \right)^T
\right \rangle^{[\bdeta]}, \qquad k=1,\cdots,D,
\end{align}
\end{subequations}
where the notation $\langle \tilde{\boldsymbol{a}}, \boldsymbol{a}^T
\rangle^{[\bdeta]}$ stands for the matrix with its entries as $\langle
\tilde{a}_i, a_j \rangle^{[\bdeta]}$. Through \eqref{eq:M_k}, we
immediately get the following criterion on the real diagonalizability
of ${\bf M}_k$:
\begin{theorem} \label{thm:real_diag}
Any linear combination of ${\bf M}_k$ is real diagonalizable if the
inner product $\langle \cdot, \cdot \rangle^{[\bdeta]}$ satisfies
\begin{equation} \label{eq:symmetric}
\langle v_k(\bxi) g_1, g_2 \rangle^{[\bdeta]} =
  \langle g_1, v_k(\bxi) g_2 \rangle^{[\bdeta]}, \qquad k = 1,\cdots,D
\end{equation}
for any $g_1, g_2 \in \mathbb{S}_M^{[\bdeta]}$.
\end{theorem}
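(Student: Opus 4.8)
The plan is to reduce the claim to the spectral theorem for real symmetric matrices, exploiting the similarity relation \eqref{eq:M_k} together with the symmetry hypothesis \eqref{eq:symmetric}. The first observation is that \eqref{eq:M_k} writes each ${\bf M}_k$ as ${\bf T}^{[\bdeta]} \tilde{\bf M}_k ({\bf T}^{[\bdeta]})^{-1}$, where the conjugating matrix ${\bf T}^{[\bdeta]}$ is the \emph{same} for every $k$, since it depends only on $\bdeta$ and is fixed once the Schmidt orthogonalization of $\boldsymbol{b}^{[\bdeta]}$ is performed. Hence, for any real coefficients $c_1,\dots,c_D$,
\[
\sum_{k=1}^D c_k {\bf M}_k =
  {\bf T}^{[\bdeta]} \left( \sum_{k=1}^D c_k \tilde{\bf M}_k \right)
  \big( {\bf T}^{[\bdeta]} \big)^{-1},
\]
so every linear combination of the ${\bf M}_k$ is similar to the corresponding combination of the $\tilde{\bf M}_k$. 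Because real diagonalizability is invariant under similarity, it suffices to prove that each real linear combination $\sum_k c_k \tilde{\bf M}_k$ is real diagonalizable.

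Next I would show that every $\tilde{\bf M}_k$ is a real symmetric matrix. By the definition in \eqref{eq:M_k}, its $(i,j)$ entry is $(\tilde{\bf M}_k)_{ij} = \langle \psi_i^{[\bdeta]}, v_k(\bxi)\,\psi_j^{[\bdeta]} \rangle^{[\bdeta]}$; since the basis functions are real and the inner product is real, each entry is a real number. For the symmetry I would apply, in order, the symmetry of the real inner product and the hypothesis \eqref{eq:symmetric} with $g_1 = \psi_j^{[\bdeta]}$ and $g_2 = \psi_i^{[\bdeta]}$, both of which lie in $\mathbb{S}_M^{[\bdeta]}$ so that the hypothesis is legitimately available. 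Writing all inner products as the $\bdeta$-inner product and abbreviating $v_k = v_k(\bxi)$, this reads
\[
(\tilde{\bf M}_k)_{ij}
  = \langle \psi_i^{[\bdeta]}, v_k \psi_j^{[\bdeta]} \rangle
  = \langle v_k \psi_j^{[\bdeta]}, \psi_i^{[\bdeta]} \rangle
  = \langle \psi_j^{[\bdeta]}, v_k \psi_i^{[\bdeta]} \rangle
  = (\tilde{\bf M}_k)_{ji},
\]
so $\tilde{\bf M}_k = \tilde{\bf M}_k^T$.

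Finally, a real linear combination of real symmetric matrices is again real symmetric, and the spectral theorem guarantees that any real symmetric matrix is orthogonally, hence real, diagonalizable. Therefore $\sum_k c_k \tilde{\bf M}_k$ is real diagonalizable, and by the similarity reduction of the first step so is $\sum_k c_k {\bf M}_k$, which establishes the theorem.

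I do not anticipate a genuine obstacle: the argument is the composition of a similarity invariance, a one-line entrywise symmetry computation, and the spectral theorem. The only points requiring a moment's care are that the common conjugator ${\bf T}^{[\bdeta]}$ factors out of the linear combination unchanged (so that diagonalizing the combination of the $\tilde{\bf M}_k$ suffices), and that the hypothesis \eqref{eq:symmetric} is applicable precisely because the orthonormal vectors $\psi_i^{[\bdeta]}$ belong to $\mathbb{S}_M^{[\bdeta]}$ — this lets multiplication by $v_k$ be transferred across the inner product even though $v_k \psi_j^{[\bdeta]}$ itself need not lie in the finite-dimensional subspace.
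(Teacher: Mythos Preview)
Your proposal is correct and follows essentially the same argument as the paper: reduce via the common similarity ${\bf M}_k = {\bf T}^{[\bdeta]} \tilde{\bf M}_k ({\bf T}^{[\bdeta]})^{-1}$ to the matrices $\tilde{\bf M}_k$, use the hypothesis \eqref{eq:symmetric} to show these are symmetric, and conclude real diagonalizability from the spectral theorem. Your write-up is slightly more explicit about the entrywise verification and the role of the common conjugator, but the substance is identical.
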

\begin{proof}
Let $a_k$, $k=1,\cdots,D$ be $D$ arbitrary reals. By equation
\eqref{eq:M_k}, we know that $\sum_k a_k {\bf M}_k$ is real
diagonalizable if and only if $\sum_k a_k \tilde{\bf M}_k$ is real
diagonalizable. According to \eqref{eq:symmetric},
\begin{equation}
\begin{split}
\sum_{k=1}^D a_k \tilde{\bf M}_k^T &= \sum_{k=1}^D a_k \left \langle
  v_k(\bxi) \tilde{\boldsymbol{b}}^{[\bdeta]},
  \left( \tilde{\boldsymbol{b}}^{[\bdeta]} \right)^T
\right \rangle^{[\bdeta]} \\
& = \sum_{k=1}^D a_k \left \langle
  \tilde{\boldsymbol{b}}^{[\bdeta]},
  v_k(\bxi) \left( \tilde{\boldsymbol{b}}^{[\bdeta]} \right)^T
\right \rangle^{[\bdeta]} = \sum_{k=1}^D a_k \tilde{\bf M}_k,
\end{split}
\end{equation}
which means $\sum_k a_k \tilde{\bf M}_k$ is a symmetric matrix, and
thus is real diagonalizable.
\end{proof}
Obviously, both inner products \eqref{eq:13m_ip} and \eqref{eq:ip}
satisfy the hypothesis of the above theorem, and then the real
diagonalizability of linear combinations of ${\bf M}_k$ can be
directly obtained for both cases. The condition \eqref{eq:symmetric}
implies a quite natual symmetry on the inner product equipped by
$\mathbb{S}_M^{[\bdeta]}$. Clearly, an inner product formulated as
\[
\langle g_1, g_2 \rangle^{[\bdeta]} = \int g_1 g_2 w^{[\bdeta]}(\bxi)
\dd \bxi,
\]
for any weight function $w^{[\bdeta]}(\bxi)$, fulfils this condition,
while this formation actually includes almost every case of practical
interests. Thus the model derived by our framework is globally
hyperbolic for most situations, only if the the unknowns is properly
chosen that the matrix ${\bf D}$ is invertible.

Now we consider the matrix ${\bf D}$. Under this framework, the
unknown vector $\bw$ is adopted as the parameters to construct a
distribution function $F(\bw; \bxi)$ in $\mathbb{S}_M^{[\bdeta]}$ to
approximate the solution of the kinetic equation. Definitely, it is
not permitted that a distribution function $F(\bw; \bxi)$ is
represented by two different set of parameters $\bw$. Otherwise, the
uniqueness of the model is already destroyed at the very beginning ---
only due to the representation of the solution. Thus it is necessary
that
\begin{equation}\label{eq:cond_uniqueness}
\bw^0 \neq \bw^1 \Longrightarrow F(\bw^0; \bxi) \not\equiv F(\bw^1; \bxi).
\end{equation}
At a given $\bw^0$, we may linearize $F(\bw^1; \bxi)$ around $\bw^0$ to
have
\begin{equation} \label{eq:F_expansion}
F(\bw^1; \bxi) \approx F(\bw^0; \bxi) +
  \frac{\partial F}{\partial \bw}(\bw^0; \bxi) (\bw^1 - \bw^0).
\end{equation}
By \eqref{eq:D} and \eqref{eq:F_expansion}, one notice that
$\partial_{\bw} F$ is an approximation of the multiplication of ${\bf
D}$ and the basis functions, by dropping off the terms out of the
approximation space $\mathbb{S}_M^{[\bdeta]}$, which results in
\[
F(\bw^1; \bxi) \approx F(\bw^0; \bxi) +
\left(\boldsymbol{b}^{[\bdeta]}(\bxi) \right)^T {\bf D}(\bw^0) (\bw^1
- \bw^0).
\]
By \eqref{eq:cond_uniqueness}, it is natural to require the matrix
${\bf D}$ to be invertible. 

\begin{remark}
  However, sometimes the matrix ${\bf D}(\bw)$ may be singular on some
  isolated points while the mapping from $\bw$ to the projection of
  $F(\bw;\bxi)$ onto $\mathbb{S}_M^{[\bdeta]}$ remains injective. For
  example, we consider the case $D=1$ and only one basis function
  \begin{equation} \label{eq:example_F}
    F(\eta,\xi) = \varphi^{[\eta]}(\xi) = \exp(-\xi^2+\eta^3).
  \end{equation}
  Let $w = \eta$, and then $\partial_w F = 3\eta^2 F$. In this case,
  $\partial_w F = 0$ when $w = 0$, but the mapping $w \mapsto
  F(\eta,\xi)$ is an injection. Such singularity is often caused by
  the existence of saddle points in the mapping, and this can be
  removed by using alternative $\bw$ locally. In \eqref{eq:example_F}, if
  we choose $w=\eta^3$, then $\partial_w F$ becomes nonzero around
  $\eta = 0$.
\end{remark}

\newcommand\Mn{\texorpdfstring{$\boldsymbol{M_N}$}{M\_N}}
\subsubsection{Example 3: {\Mn} model for radiative transport}
Consider the radiative transfer equation
\begin{equation}
\frac{1}{c} \frac{\partial f}{\partial t} +
  \bv(\bxi) \cdot \frac{\partial f}{\partial \bx} =
  \mathcal{C}(f; T),
\end{equation}
where $c$ is the speed of light, the right hand side $\mathcal{C}$
models interactions between photons and the background medium with
$T$ as the material temperature, and
\begin{equation}
\begin{aligned}
& \bx = (x_1, x_2, x_3)^T, \qquad
\bxi = (\theta, \phi)^T \in [0,\pi] \times [0,2\pi), \\
& \bv(\bxi) = (v_1(\bxi), v_2(\bxi), v_3(\bxi))^T
  = (\sin\theta \cos\phi, \sin\theta \sin\phi, \cos\theta)^T.
\end{aligned}
\end{equation}
For any integer $N \geqslant 1$, define
\begin{equation}
\mathcal{M}^{[\bdeta]} = \left[ \exp\left(
  -\frac{\hbar \nu c}{k_{\rm B}} \sum_{m=0}^N \sum_{l=-m}^m \eta_{lm} Y_{lm}(\bxi)
\right) - 1 \right]^{-1},
\end{equation}
where $Y_{lm}(\bxi)$ are real spherical harmonics, and $\eta_{lm}$ are
the corresponding coefficients. The parameters $\hbar$, $\nu$ and
$k_{\rm B}$ denote Planck's constant, the frequency and Boltzmann's
constant, respectively. Then we define basis functions as
\begin{equation}
\varphi_{lm}^{[\bdeta]}(\bxi) = Y_{lm}(\bxi)
  \left( 1 + \mathcal{M}^{[\bdeta]} \right) \mathcal{M}^{[\bdeta]},
\qquad l=-m, \cdots, m,
\end{equation}
and the Hilbert spaces $\mathbb{H}^{[\bdeta]}$ and
$\mathbb{S}_N^{[\bdeta]}$:
\begin{subequations}
\begin{align}
& \mathbb{H}^{[\bdeta]} = \mathrm{span} \left\{
  \varphi_{lm}^{[\bdeta]} \,\Big|\, m=0,1,\cdots, \quad l=-m,\cdots,m
\right\}, \\
& \mathbb{S}_N^{[\bdeta]} = \mathrm{span} \left\{
  \varphi_{lm}^{[\bdeta]} \,\Big|\, m=0,1,\cdots,N, \quad l=-m,\cdots,m
\right\}, \\
\label{eq:me_ip}
& \langle g_1, g_2 \rangle^{[\bdeta]} = \int_0^{2\pi} \int_0^{\pi}
  g_1(\bxi) g_2(\bxi) \left[
    \left( 1 + \mathcal{M}^{[\bdeta]} \right) \mathcal{M}^{[\bdeta]}
  \right]^{-1} \dd \theta \dd \phi.
\end{align}
\end{subequations}
Using power series expansion, we have
\begin{equation}
\exp\left(
  \frac{\hbar \nu c}{k_{\rm B}} \sum_{m=0}^N \sum_{l=-m}^m \eta_{lm} Y_{lm}(\bxi)
\right) = \sum_{m=0}^{+\infty} \sum_{l=-m}^m
  a_{lm}^{[\bdeta]} Y_{lm}(\bxi).
\end{equation}
Now we approximate the function $f$ by
\begin{equation} \label{eq:max_entropy}
\begin{split}
\tilde{f}(t,\bx,\bxi) &=
  \frac{2 \hbar \nu^3}{c^2} \mathcal{M}^{[\bdeta(t,\bx)]}
= \frac{2 \hbar \nu^3}{c^2} \left[ 1 - \exp\left(
  \frac{\hbar \nu c}{k_{\rm B}} \sum_{m=0}^N \sum_{l=-m}^m \eta_{lm} Y_{lm}(\bxi)
\right) \right]
\left( 1+\mathcal{M}^{[\bdeta]} \right) \mathcal{M}^{[\bdeta]} \\
& = \frac{2 \hbar \nu^3}{c^2} (1-a_{00}^{[\bdeta(t,\bx)]})
    \varphi_{00}^{[\bdeta(t,\bx)]}(\bxi) -
  \sum_{m=0}^{+\infty} \sum_{l=-m}^m \frac{2 \hbar \nu^3}{c^2}
    a_{lm}^{[\bdeta(t,\bx)]} \varphi_{lm}^{[\bdeta(t,\bx)]}(\bxi)
  \in \mathbb{H}^{[\bdeta]}.
\end{split}
\end{equation}
We choose $\bw = \bdeta$ and the constraints are spontaneouly provided by
\eqref{eq:max_entropy}.

Taking time and spatial derivatives on both sides of
\eqref{eq:max_entropy}, we have
\begin{equation}
\begin{split}
\frac{\partial \tilde{f}}{\partial s} &=
  \frac{2 \hbar^2\nu^4}{k_{\rm B} c} \sum_{m=0}^N \sum_{l=-m}^m
  \frac{\partial \eta_{lm}}{\partial s} Y_{lm}(\bxi)
    \left( 1+\mathcal{M}^{[\bdeta]} \right) \mathcal{M}^{[\bdeta]} \\
& = \frac{2 \hbar^2\nu^4}{k_{\rm B} c} \sum_{m=0}^N \sum_{l=-m}^m
  \frac{\partial \eta_{lm}}{\partial s} \varphi_{lm}^{[\bdeta]}(\bxi),
\qquad s=t,x_1,x_2,x_3.
\end{split}
\end{equation}
which is already a function in $\mathbb{S}_N^{[\bdeta]}$ so that no
additional projection is needed. Meanwhile, it is clear that ${\bf D}
= \dfrac{2\hbar^2 \nu^4}{k_{\rm B} c} {\bf I}$ which is obviously invertible.

In order to obtain the hyperbolic system, one still needs to write the
matrices ${\bf M}_k(\bdeta)$ and project $\mathcal{C}(\tilde{f}; T)$
onto $\mathbb{S}_N^{[\bdeta]}$. All these calculations are routine and
here we omit the details which are quite tedious. We comment that the
resulting system is globally hyperbolic since the inner product
\eqref{eq:me_ip} satisfies the hypothesis of Theorem
\ref{thm:real_diag}, and the initial approximation
\eqref{eq:max_entropy} reveals that this moment system coincides with
the $M_N$ model \cite{Hauck}.


\section{Concluding remarks} \label{sec:conclusion} Through
investigation on a special type of one-dimensional hyperbolic models,
a general framework for the construction of hyperbolic models from the
kinetic equations has been established. We have shown that this
framework already covers a number of existing hyperbolic models (the
1D and $n$D hyperbolic regularizations of Grad's moment methods
\cite{Fan, Fan_new}, the globally hyperbolic moment systems with
generalized Hermite expansion \cite{ANRxx} and the $M_N$ model in
radiative transfer \cite{Hauck}), and can also be used to discover new
hyperbolic models. Actually, some more models such as the classic
discrete velocity model, the maximum entropy model \cite{Levermore}
and the quadrature-based projection method \cite{Koellermeier} are
also included. 

For a first order model with Cauchy data, the hyperbolicity is
necessary to the existence of the solution. Historically, the
hyperbolicity of different models is a quite subtle problem and has to
be analysed case by case with great patience, which eventually leads
to some prejudices against the moment method. It seems that the new
perception on the hyperbolicity of the moment equation in this paper
may bring us the dawn of hope to further development of the moment
method for kinetic equations.

\section*{Acknowledgements}
Cai was supported by the Alexander von Humboldt Research Fellowship
awarded by the Alexander von Humboldt Foundation of Germany. Li was
supported in part by the National Natural Science Foundation of China
No. 11325102 and No. 91330205.


\bibliographystyle{plain}
\bibliography{../article}

\begin{thebibliography}{10}

\bibitem{Fan}
Z.~Cai, Y.~Fan, and R.~Li.
\newblock Globally hyperbolic regularization of {G}rad's moment system in one
  dimensional space.
\newblock {\em Comm. Math Sci.}, 11(2):547--571, 2013.

\bibitem{Fan_new}
Z.~Cai, Y.~Fan, and R.~Li.
\newblock Globally hyperbolic regularization of {G}rad's moment system.
\newblock {\em Comm. Pure Appl. Math.}, 67(3):464--518, 2014.

\bibitem{Qiao}
Z.~Cai, Y.~Fan, R.~Li, and Z.~Qiao.
\newblock Dimension-reduced hyperbolic moment method for the {B}oltzmann
  equation with {BGK}-type collision.
\newblock To appear in Commun. Comput. Phys.

\bibitem{NRxx}
Z.~Cai and R.~Li.
\newblock Numerical regularized moment method of arbitrary order for
  {B}oltzmann-{BGK} equation.
\newblock {\em SIAM J. Sci. Comput.}, 32(5):2875--2907, 2010.

\bibitem{NRxx_new}
Z.~Cai, R.~Li, and Y.~Wang.
\newblock Numerical regularized moment method for high {M}ach number flow.
\newblock {\em Commun. Comput. Phys.}, 11(5):1415--1438, 2012.

\bibitem{Grad13toR13}
Z.-N. Cai, Y.-W. Fan, and R.~Li.
\newblock On hyperbolicity of 13-moment system.
\newblock {\em arXiv:1401.7523}, 2013.

\bibitem{Cowling}
S.~Chapman and T.~G. Cowling.
\newblock {\em The Mathematical Theory of Non-uniform Gases, Third Edition}.
\newblock Cambridge University Press, 1990.

\bibitem{ANRxx}
Y.-W. Fan and R.~Li.
\newblock Globally hyperbolic moment system by generalized {H}ermite expansion.
\newblock {\em arXiv:1401.4639}, 2014.

\bibitem{Grad}
H.~Grad.
\newblock On the kinetic theory of rarefied gases.
\newblock {\em Comm. Pure Appl. Math.}, 2(4):331--407, 1949.

\bibitem{Hauck}
C.~D. Hauck.
\newblock High-order entropy-based closures for linear transport in slab
  geometry.
\newblock {\em Commun. Math. Sci.}, 9(1):187--205, 2011.

\bibitem{Koellermeier}
J.~Koellermeier, R.~Schaerer, and M.~Torrilhon.
\newblock A framework for hyperbolic approximation of kinetic equations using
  quadrature-based projection methods.
\newblock submitted.

\bibitem{Levermore}
C.~D. Levermore.
\newblock Moment closure hierarchies for kinetic theories.
\newblock {\em J. Stat. Phys.}, 83(5--6):1021--1065, 1996.

\bibitem{McDonald}
J.~McDonald and M.~Torrilhon.
\newblock Affordable robust moment closures for {CFD} based on the
  maximum-entropy hierarchy.
\newblock {\em J. Comput. Phys.}, 251:500--523, 2013.

\bibitem{Muller}
I.~M{\"u}ller and T.~Ruggeri.
\newblock {\em Rational Extended Thermodynamics, Second Edition}, volume~37 of
  {\em Springer tracts in natural philosophy}.
\newblock Springer-Verlag, New York, 1998.

\bibitem{Struchtrup}
H.~Struchtrup.
\newblock {\em Macroscopic Transport Equations for Rarefied Gas Flows:
  Approximation Methods in Kinetic Theory}.
\newblock Springer, 2005.

\end{thebibliography}
\end{document}